\theoremstyle{definition}
\newtheorem{defi}{Definition}
\newtheorem{lem}[defi]{Lemma}
\begin{document}

\title{Quantifying the nonclassicality of pure dephasing}
\author{Hong-Bin Chen}
\email{hongbinchen@phys.ncku.edu.tw}
\affiliation{Department of Physics, National Cheng Kung University, Tainan 70101, Taiwan}
\affiliation{Center for Quantum Frontiers of Research \& Technology, NCKU, Tainan 70101, Taiwan}
\affiliation{Department of Engineering Science, National Cheng Kung University, Tainan 70101, Taiwan}
\author{Ping-Yuan Lo}
\affiliation{Department of Electrophysics, National Chiao Tung University, Hsinchu 30010, Taiwan}
\author{Clemens Gneiting}
\affiliation{Theoretical Quantum Physics Laboratory, RIKEN Cluster for Pioneering Research, Wako-shi, Saitama 351-0198, Japan}
\author{Joonwoo Bae}
\affiliation{School of Electrical Engineering, Korea Advanced Institute of Science and Technology (KAIST), 291 Daehak-ro, Yuseong-gu, Daejeon 34141, Republic of Korea}
\author{Yueh-Nan Chen}
\email{yuehnan@mail.ncku.edu.tw}
\affiliation{Department of Physics, National Cheng Kung University, Tainan 70101, Taiwan}
\affiliation{Center for Quantum Frontiers of Research \& Technology, NCKU, Tainan 70101, Taiwan}
\author{Franco Nori}
\affiliation{Theoretical Quantum Physics Laboratory, RIKEN Cluster for Pioneering Research, Wako-shi, Saitama 351-0198, Japan}
\affiliation{Physics Department, University of Michigan, Ann Arbor, Michigan 48109-1040, USA}
\date{\today}

\begin{abstract}
One of the central problems in quantum theory is to characterize, detect, and quantify quantumness in terms of classical strategies. Dephasing processes, caused by non-dissipative information exchange between quantum systems and environments, provides a natural platform for this purpose, as they control the quantum-to-classical transition. Recently, it has been shown that dephasing dynamics itself can exhibit (non)classical traits, depending on the nature of the system-environment correlations and the related (im)possibility to simulate these dynamics with Hamiltonian ensembles---the classical strategy. Here we establish the framework of detecting and quantifying the nonclassicality for pure dephasing dynamics. The uniqueness of the canonical representation of Hamiltonian ensembles is shown, and a constructive method to determine the latter is presented. We illustrate our method for qubit, qutrit, and qubit-pair pure dephasing and describe how to implement our approach with quantum process tomography experiments. Our work is readily applicable to present-day quantum experiments.
\newpage
\end{abstract}

\maketitle

The boundary between the quantum and the classical world has always been a fundamental issue in quantum mechanics
\cite{ballentine_q_statistics_rmp_1970,zurek_q-c_transition_rmp_2003,schlosshauer_q-c_transition_rmp_2005,modi_q-c_boundary_discord_rmp_2012}.
An operationally viable way to demonstrate the genuine quantum nature of an experiment relies on the impossibility to mimic certain statistical properties of interest by using a ``classical strategy''. According to this logic, the quantum nature of an experiment is only convincingly demonstrated if the experimental statistics cannot be mimicked by the classical strategy; thus excluding any loophole to explain the statistics with a classical model.

For example, under the assumptions of realism and locality, Bell \cite{bell_ineq_phys_1964} derived an inequality for correlations between the
statistics of measurements on a bipartite system. Whenever the inequality is violated, one cannot reproduce the correlations by using a local hidden variable model, the latter serving
as the classical strategy for mimicking the measurement statistics. Another important paradigm is the quantumness of a boson field, which is formulated in terms
of the Wigner function or the Glauber-Sudarshan $P$ representation \cite{wigner_func_pr_1932, glauber_pr_1963, sudarshan_prl_1963}. Whenever these functions
exhibit negative values, the classical explanation in terms of a probability distribution over phase space fails to represent the boson field.

Following this spirit, one may ask for a classical strategy to frame the ``quantumness'' of open system dynamics. This question has been addressed in different ways. In these
approaches, specific properties of system states, e.g., Wigner functions with negativities, violation of Leggett-Garg inequality, non-stochasticity of dynamical processes,
or detection of quantum coherence, are identified as indicators of nonclassicality and monitored during the temporal evolution \cite{rahimikeshari_process_n_cla_prl_2013,krishna_process_n_cla_pra_2016,neill_n_cla_trans_prl_2010,hengna_n_markovianity_scirep_2015,jenhsiang_process_n_cla_scirep_2017,
simon_kol_ext_theo_arxiv_2017,george_coh_wit_pra_2018,smirne_process_n_cla_qst_2019}.

Alternatively, we propose to take the presence or absence of quantum correlations between system and environment as a signature for the quantum nature of the open system dynamics. As was shown recently \cite{hongbin_process_n_cla_prl_2018}, such presence or absence of nonclassical system-environment correlations is intimately linked to the (im)possibility to simulate the open system dynamics with a Hamiltonian ensemble (HE), which may thus serve as the classical strategy to witness the nonclassicality of the open system dynamics. HEs, which are also used to describe disordered quantum systems, attribute to each member of a collection of (time-independent) Hamiltonians a probability of occurrence, giving rise to an effective average dynamics.

Finding a simulating HE certifies that the open dynamics is classical. The nonexistence of a simulating HE, on the other hand, can be proven by
the necessity to resort to a HE accompanied by negative quasi-distributions.
Although being conceptually clear, as was shown in Ref. \cite{hongbin_process_n_cla_prl_2018} for the example of an extended spin-boson model, this is technically
highly nontrival in general; especially for high dimensions.
For example, the closely related problem of random-unitary decomposition can in general merely be numerically implemented \cite{audenaert_randon_unitary_njp_2008}.
An efficient approach appears desirable.

On the other hand, analyzing dephasing is essential for the improvements of quantum information science and quantum technologies.
Besides its fundamental relevance for the quantum-to-classical transition \cite{zurek_q-c_transition_phys_tod_1991,joos_textbook_2003,schlosshauer_textbook_2007},
classicality of the dynamics, reflected by the existence of a simulating HE, can then be related to the in-principle possibility to correct errors caused by the HE \cite{buscemi_err_correction_prl_2005}. Furthermore, it also constitutes one of the main obstacles in the fabrication and manipulation of quantum information devices
\cite{vandersypen_nmr_nature_2001,petta_st0_qubit_science_2005,foletti_st0_qubit_nat_phys_2009,ladd_quant_comp_nature_2010,buluta_quant_comp_rpp_2011,georgescu_quant_simu_rmp_2014}. Different implementations for the simulation of controlled pure dephasing
\cite{myatt_simul_deph_nature_2000, liu_simul_deph_np_2011, liu_simul_deph_nc_2018} and its mitigation
\cite{veldhorst_qubit_control_nn_2014, shulman_supp_qubit_deph_nc_2014, delbecq_qubit_deph_prl_2016, balasubramanian_nv_centre_nm_2009, bar-gill_nv_centre_nc_2013} exist.
Other experiments highlight the potential of decoherence or pure dephasing to contribute positively to certain quantum information tasks, such as entanglement stabilization \cite{shankar_ent_sup_cond_nature_2013} or entanglement swap \cite{nakajima_elec_tran_deph_nc_2018}.

Here we introduce a measure of nonclassicality for pure dephasing dynamics, i.e., we focus on situations where dephasing constitutes the sole dynamical agent.
We begin with recasting any HE into a canonical form; within this framework, each HE is composed of the same canonical set of Hamiltonians, such that the accompanying (quasi-)distribution fully characterizes the HE. Let us remark that one can interpret the resulting representation as a random rotation model, since it is a (quasi-)distribution of rotations induced by the Hamiltonians. We also prove its existence and uniqueness. This promotes it to a faithful representation of the pure dephasing dynamics and allows us to unambiguously quantify the nonclassicality. Additonally, we outline a systematic procedure to retrieve (quasi-)distributions for pure dephasing and elaborate our ideas for qubit, qutrit, and qubit-pair examples.
Finally, we also discuss the implementation of our approach with quantum process tomography experiments to show the ready applicability
to present-day quantum experiments.
\vspace{12pt}

\noindent\textbf{\textsf{Results}}

\noindent\textbf{Averaged dynamics of Hamiltonian ensembles.} A HE $\{(p_\lambda,\widehat{H}_\lambda)\}_\lambda$ is a
collection of Hermitian operators $\widehat{H}_\lambda$ acting on the same system \cite{hongbin_process_n_cla_prl_2018,Kropf2016effective}, where each member Hamiltonian is drawn
according to the probability distribution $p_\lambda\geq0$. A system $\rho_0$, isolated from any environment, is sent into a unitarily-evolving channel
$\rho_\lambda(t)=\widehat{U}_\lambda\rho_0\widehat{U}_\lambda^\dagger$, with $\widehat{U}_\lambda=\exp[-i\widehat{H}_\lambda t/\hbar]$ for a chosen
$\widehat{H}_\lambda$ according to $p_\lambda$. Then, the dynamics of the averaged state $\overline{\rho}(t)$ is given by the unital map
 \begin{equation}
\overline{\rho}(t)=\mathcal{E}_t\{\rho_0\}=\int p_\lambda\widehat{U}_\lambda\rho_0\widehat{U}_\lambda^\dagger d\lambda.
\label{eq_ensembe_averaged_dynamics}
\end{equation}

Even though each single realization $\rho_\lambda(t)$ evolves unitarily, the averaged state $\overline{\rho}(t)$ exhibits incoherent behavior
\cite{Gneiting2016incoherent,Kropf2016effective,clemens_disordered_pra_2017,clemens_disordered_prl_2017}. A seminal and intriguing example is a single qubit subject to spectral disorder with HE given by
$\{(p(\omega),\hbar\omega\hat{\sigma}_{z}/2)\}_\omega$, then the averaged dynamics describes pure dephasing:
\begin{equation}
\bar{\rho}(t)=\left[
\begin{array}{cc}
\rho_{\uparrow\uparrow} & \rho_{\uparrow\downarrow}\phi(t) \\
\rho_{\downarrow\uparrow}\phi^\ast(t) & \rho_{\downarrow\downarrow}
\end{array}
\right], \label{eq_he_pure_dephasing}
\end{equation}
with the dephasing factor $\phi(t)=\int p(\omega)e^{-i\omega t}d\omega$ being the Fourier transform of the probability distribution $p(\omega)$.

The pure dephasing in Eq.~(\ref{eq_he_pure_dephasing}) is a consequence of the commuting member Hamiltonian $\hbar\omega\hat{\sigma}_{z}/2$ in the ensemble. Each
Hamiltonian induces a unitary rotation about the $z$-axis of the Bloch sphere at angular velocity $\omega$. This gives rise to an intuitive interpretation of pure
dephasing in terms of random phases: each component rotates at different angular velocity $\omega$ and hence possesses its own time-evolving phase. Consequently
the phase of the averaged system gradually blurs out.

Note that $p(\omega)$ is the probability distribution of the angular velocity and qualitatively characterizes the ``randomness'' of the random rotation. Whenever
$p(\omega)$ is specified, the dynamics is uniquely determined via the Fourier transform in Eq.~(\ref{eq_he_pure_dephasing}). This is also in line with our
classification of such pure dephasing as classical \cite{hongbin_process_n_cla_prl_2018} since it is a statistical mixture of rotations at different angular velocities.
Meanwhile, the experimental simulation of pure dephasing is implemented in a similar spirit \cite{myatt_simul_deph_nature_2000,liu_simul_deph_np_2011,liu_simul_deph_nc_2018}.

\noindent\textbf{Canonical Hamiltonian-ensemble representation.} Although $p(\omega)$ is particularly representative for characterizing qubit pure dephasing, it is obvious
that, in general cases with non-commuting or higher dimensional member Hamiltonians $\widehat{H}_\lambda$, the Fourier transform in
Eq.~(\ref{eq_he_pure_dephasing}) is not applicable. We are therefore spurred to explore the canonical Hamiltonian-ensemble representation (CHER) as
a generalized representation of an averaged dynamics.

To fully understand the CHER, we first observe that, since both $\widehat{H}_\lambda$ and density matrices
$\rho$ are Hermitian, they are elements in the Lie algebra $\mathfrak{u}(n)=\mathfrak{u}(1)\oplus\mathfrak{su}(n)$, which are spanned by the identity $\{\widehat{I}\}$
and $\{\widehat{L}_m\}_m$ of $n^2-1$ traceless Hermitian generators, respectively. Then $\widehat{H}_\lambda\in\mathfrak{u}(n)$ is a linear combination
$\widehat{H}_\lambda=\lambda_0\widehat{I}+\sum_{m=1}^{n^2-1} \lambda_m\widehat{L}_m=\lambda_0\widehat{I}+\boldsymbol{\lambda}\cdot\widehat{\mathbf{L}}$,
where $\lambda_0\in\mathbb{R}$ and $\boldsymbol{\lambda}=\{\lambda_m\}_m\in\mathbb{R}^{n^2-1}$.

Since the dynamics is a linear map acting on $\rho$, invoking to the adjoint representation (see \textbf{\textsf{Methods}} and Supplementary Note 1), we can
assign each generator $\widehat{L}_m$ a linear map $\widehat{L}_m\mapsto\widetilde{L}_m\in\mathfrak{gl}(\mathfrak{u}(n))$, with its action
$\widetilde{L}_m(\bullet)=[\widehat{L}_m,\bullet]$ defined in terms of the commutator.

With the above mathematical setup, given a HE $\{(p_\lambda,\widehat{H}_\lambda)\}_\lambda$, one can consider the probability distribution $p_\lambda$ as a
CHER of an averaged dynamics $\mathcal{E}_t$, in the sense that Eq.~(\ref{eq_ensembe_averaged_dynamics}) can always be recast into
a Fourier transform from $p_\lambda$, on a locally compact group $\mathcal{G}$ characterized by
the parameter space $\lambda=\{\lambda_0,\boldsymbol{\lambda}\}$, to the dynamical linear map $\mathcal{E}^{(\widetilde{L})}_t$:
\begin{equation}
\mathcal{E}^{(\widetilde{L})}_t=\int_\mathcal{G} p_\lambda e^{-i\lambda\widetilde{L}t} d\lambda.
\label{eq_general_f-s_transform}
\end{equation}
Note that we have set $\hbar=1$ for symbolic abbreviation. Similarly, we can also express
$\rho=n^{-1}\widehat{I}+\boldsymbol{\rho}\cdot\widehat{\mathbf{L}}$ in terms of a column vector $\rho=\{n^{-1},\boldsymbol{\rho}\}$,
the action of $\mathcal{E}_t$ on $\rho$ is then the usual matrix multiplication
$\mathcal{E}_t\{\rho\}=\mathcal{E}^{(\widetilde{L})}_t\cdot\rho$ [see Supplementary Note 2 for the proof of Eq.~(\ref{eq_general_f-s_transform})].

We emphasize that, compared with Eq.~(\ref{eq_ensembe_averaged_dynamics}), the Fourier transform formalism~(\ref{eq_general_f-s_transform})
is a powerful tool in the following proof of uniqueness and establishment of our procedure. It also highlights our exclusive focus on the dynamics alone, regardless of
the system state. Additionally, it provides further insights into the nature of CHER and the connection to the process nonclassicality, in terms of a random rotation model.
In such interpretation, different components rotate about different axes, defined by the generators $\{\widehat{L}_m\}_m$.
Moreover, $p_\lambda$ is the distribution function of the random rotations over the $n^2$-dimensional Euclidean space.
This interpretation is consistent with the random phase model in the case of qubit pure dephasing~(\ref{eq_he_pure_dephasing}).

\noindent\textbf{HE simulation and process nonclassicality.} So far we have discussed the averaged dynamics of an isolated
system, in the absence of any environment, governed by a HE. Conversely, to discuss the nonclassicality of an open system dynamics reduced from a
system-environment arrangement, we should construct a simulating $\{(\wp_\lambda,\widehat{H}_\lambda)\}_\lambda$ for a given unital dynamics.

An autonomous system-environment arrangement is characterized by a time-independent total Hamiltonian $\widehat{H}_\mathrm{T}$ and evolves unitarily with
$\widehat{U}_\mathrm{T}=\exp[-i\widehat{H}_\mathrm{T}t]$. We have shown that \cite{hongbin_process_n_cla_prl_2018}, if the total system
$\rho_\mathrm{T}(t)=\widehat{U}_\mathrm{T}\rho_\mathrm{T}(0)\widehat{U}_\mathrm{T}^\dagger$ remains at all times classically correlated between the system and
its environment, displaying neither quantum discord \cite{quantum_discord_prl_2002,condition_nonzero_discord_prl_2010} nor entanglement, then the reduced system
dynamics $\rho_\mathrm{S}(t)=\mathcal{E}_t\{\rho_\mathrm{S}(0)\}=\mathrm{Tr}_\mathrm{E}[\rho_\mathrm{T}(t)]$ can be described by a time-independent HE equipped
with a legitimate (i.e., non-negative and normalized to unity) probability distribution. Moreover, such ensemble description under classical environments
in the absence of back-action has also been discussed in the literature \cite{franco_ensemble_pra_2012,franco_ensemble_nat_commun_2013,franco_ensemble_ann_phys_2014}.

However, given exclusively the knowledge on the reduced system dynamics $\mathcal{E}_t$, it is impossible to fully verify the correlations between the system and
its environment. Counter-intuitively, even if we have limited access to the system alone, the emergence of nonclassical correlations can be witnessed, whenever
one has no way to simulate the dynamics with any HEs equipped with a legitimate probability distribution. Such impossibility to simulate arises from the buildup
of nonclassical correlations. On the other hand, if such simulation is possible, one can explain $\mathcal{E}_t$ as a classical random rotation model. We therefore
define the negative values of the quasi-distribution $\wp_\lambda$ within the simulating HE as an indicator of process nonclassicality \cite{hongbin_process_n_cla_prl_2018}.

\noindent\textbf{Existence and uniqueness of the CHER for pure dephasing.} Here we promote the $\wp_\lambda$ within the simulating HE as a
CHER for a reduced system dynamics. In particular, by further investigating the underlying algebraic structures, we can show that such CHER for pure dephasing
is even faithful, provided diagonal member Hamiltonians. More precisely, for any pure dephasing dynamics, there always exists a unique simulating HE of
diagonal member Hamiltonians, equipped with either a legitimate or quasi-distribution.

The proof will become intelligible only after introducing our procedure to find the CHER below. We postpone it to Supplementary Note 8.

Since $\wp_\lambda$ is a distribution function over the parameter space of diagonal member Hamiltonians, along with the Fourier transform on the group $\mathcal{G}$ in
Eq.~(\ref{eq_general_f-s_transform}), this endows the CHER with a geometric interpretation of pure dephasing in terms of random rotation model.
Consequently, the CHER is particularly competent in characterization of the nonclassicality of pure dephasing.

\noindent\textbf{The nonclassicality measure for pure dephasing dynamics.} Having characterized the HE simulation of pure dephasing and its representation, we are now ready to propose the measure of nonclassicality of dynamics. The measure aims to provide an operational quantification on the nonclassicality of a pure dephasing dynamics. Due to the existence and uniqueness, every pure dephasing $\mathcal{E}_t$ can be assigned a unique (quasi-)distribution $\wp_\lambda$. We emphasize that it is the distribution $\wp_{\lambda}$ which gives the characterization of the nonclassicality: unless they correspond to legitimate probabilities, no HE exists for the exact simulation.

The nonclassicality measure for a dynamics $\mathcal{E}_t$ assigned with a unique (quasi-)distribution $\wp_\lambda$ is as follows,
\begin{equation}
\mathcal{N}\{\mathcal{E}_t\} = \inf_{p_\lambda} \mathrm{D}(\wp_\lambda, p_\lambda ),~\mathrm{with}~ \mathrm{D}(p_\lambda,p_\lambda^\prime)=\int_\mathcal{G}\frac{1}{2}\vert p_\lambda-p_\lambda^\prime\vert d\lambda, \label{eq_measure_nonclassicality}
\end{equation}
where the infimum runs over all classical probability distributions $p_\lambda$ over the parameter space $\mathcal{G}$ of the diagonal member Hamiltonians. The variational distance $\mathrm{D}(p_\lambda,p_\lambda^\prime)$ has an operational meaning as the single-shot distinguishability: it quantifies the highest success probability of distinguishing two probabilistic systems $p_\lambda$ and $p_\lambda^\prime$, such that $p_{\mathrm{success}} = [1 + \mathrm{D}(p_\lambda,p_\lambda^\prime)]/2$.

The measure proposed in Eq. (\ref{eq_measure_nonclassicality}) contains advantages and useful properties for the quantification. First, the measure has a clear operational meaning. It tells how well a dynamics $\mathcal{E}_t$ can be simulated by a HE. The possibility of making success or failure in the simulation with a HE can be found. Second, the measure is monotonic that the larger it is, the harder a classical simulation is. This follows from the fact that the classical dynamics of pure dephasing forms a convex set, i.e., their probabilistic mixture is also classical. The proof is presented in Supplementary Note 3. It is noteworthy that the convexity can be constructed by considering (quasi-)probabilities of dynamics, but not dynamics {\it per se}. Finally, we also note that the measure shares some similarities with the quantification of non-Markovianity \cite{BLP_measure_prl_2009}.

In what follows, we consider the nonclassicality of pure dephasing dynamics on a single qubit reduced from the extended spin-boson model \cite{hongbin_process_n_cla_prl_2018} with
a relative phase between the coupling constants, i.e., $g_{2,\mathbf{k}}=g_{1,\mathbf{k}}e^{i\varphi}$. The quasi-distribution $\wp_\mathrm{o1}^{(\mathrm{X})}(\omega)$
represents the single qubit pure dephasing and, consequently, its nonclassicality varies with $\varphi$. The results are shown in Fig.~1.

\begin{figure}[th]
\includegraphics[width=\columnwidth]{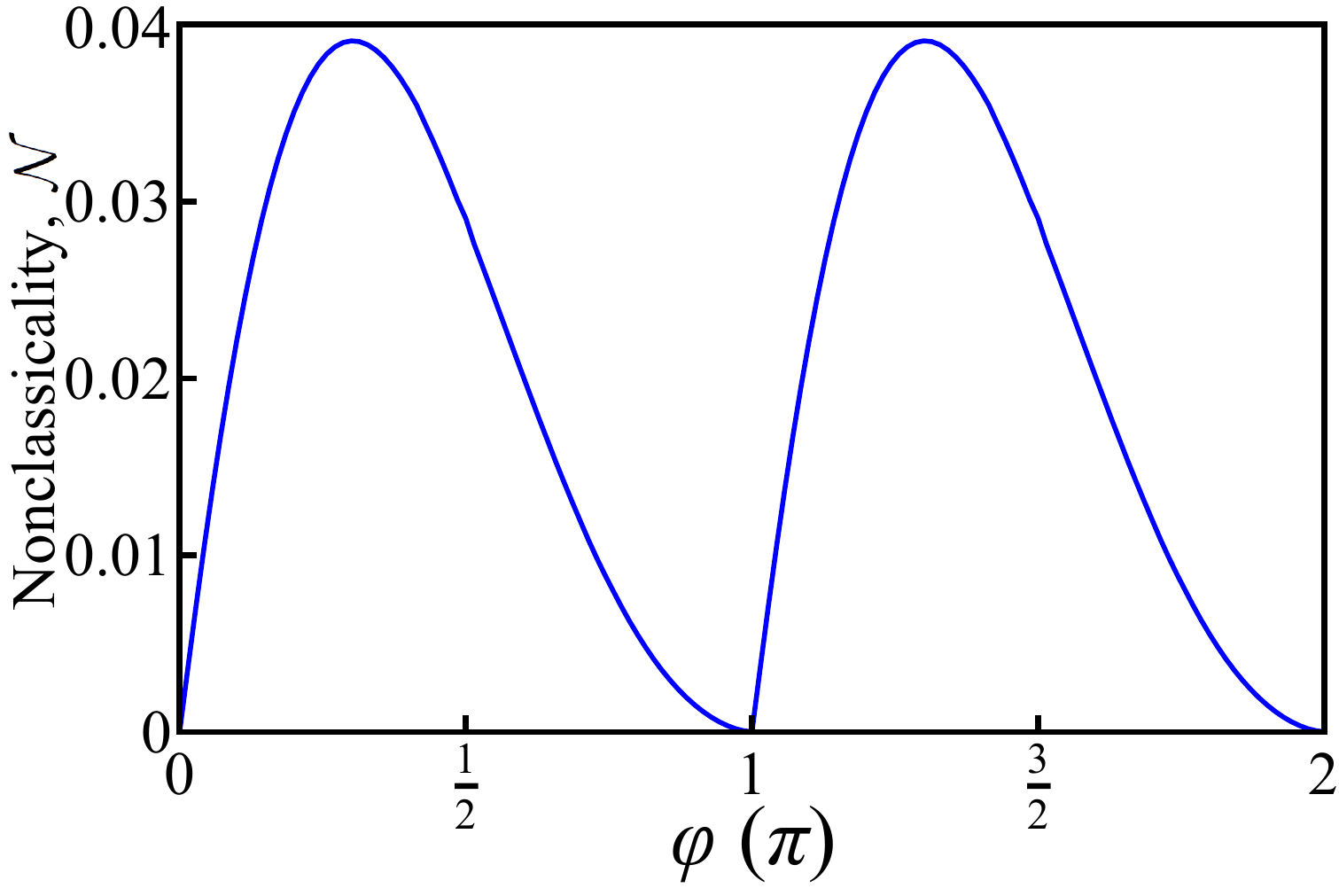}
\caption{The nonclassicality of the qubit pure dephasing. We consider the qubit pure dephasing reduced from the extended spin-boson model, wherein $\varphi$
(in unit of $\pi$) is the relative phase between the coupling constants of the qubit-pair to the common boson environment. The nonclassicality $\mathcal{N}$
is quantified according to Eq.~(\ref{eq_measure_nonclassicality}). In this example, the Ohmic spectral density
$\mathcal{J}(\omega)=\omega\exp(-\omega/\omega_\mathrm{c})$ with cut-off $\omega_\mathrm{c}=1$ and the zero-temperature limit are considered.}
\label{fig_pure dephasing nonclassicality}
\end{figure}

\noindent\textbf{Retrieval of the (quasi-)distribution.} Given a HE, it is, in principle, straightforward to calculate the averaged dynamics of an isolated system,
according to Eq.~(\ref{eq_ensembe_averaged_dynamics}) [or, equivalently, to Eq.~(\ref{eq_general_f-s_transform})]. Nevertheless, to find the solution to the
inverse transform of Eq.~(\ref{eq_general_f-s_transform}), i.e., retrieval of the (quasi-)distribution within the simulating HE for a given reduced dynamics,
is formidable in general, in contrast to the conventional inverse Fourier transform.
Consequently, to establish a systematic procedure to find the CHER of pure dephasing dynamics is very desirable.

In view of the qubit pure dephasing in Eq.~(\ref{eq_he_pure_dephasing}), to simulate any higher dimensional pure dephasing dynamics, we focus on the traceless
and diagonal member Hamiltonian such that $\widehat{H}_\lambda=\lambda\widehat{L}$ belongs to the Cartan subalgebra (CSA) $\mathfrak{H}$ of
$\mathfrak{su}(n)$ (see \textbf{\textsf{Methods}}). The tracelessness is due to the fact that the trace plays no role in describing the dynamics. Additionally,
since the adjoint representation preserves the structure of commutator, the adjoint representation of $\mathfrak{H}$ is also a CSA of
$\mathfrak{sl}(\mathfrak{u}(n))$. We therefore have the following commutativity $[\lambda\widehat{L},\lambda^\prime\widehat{L}]=0\Leftrightarrow
[\lambda\widetilde{L},\lambda^\prime\widetilde{L}]=0$.

It should be noted that, even if $\lambda\widehat{L}\in\mathfrak{H}$ can be chosen to be diagonal, $\lambda\widetilde{L}$ itself may not necessarily be
diagonal as well since the generators of $\mathfrak{u}(n)$ are not the suitable bases for diagonalizing it. As we will see below, the diagonalization of the adjoint
representation is a critical step to the retrieval of the (quasi-)distribution for pure dephasing.

Furthermore, the conventional inverse Fourier transform does not work because we are now dealing with linear maps in the $\mathfrak{sl}(\mathfrak{u}(n))$
space. To efficiently establish a set of equations governing the CHER of pure dephasing, we inevitably encounter increasingly many mathematical
terminologies, especially those specifying the intrinsic algebraic structures within the CHER. To make our procedure transparent, we instead
demonstrate several examples, each of which reveals the central concepts of our procedure, rather than elaborate the mathematical tutorial.
Our approach can be easily generalized to higher dimensional pure dephasing.

\noindent\textbf{Procedure towards the CHER of pure dephasing.} We begin with the case of qubit pure dephasing. Although this problem has been
discussed \cite{hongbin_process_n_cla_prl_2018}, it relies on the conventional Fourier transform and Bochner's theorem \cite{bochner_math_ann_1933} and cannot be
generalized to higher dimensional systems. Here we recast it into Eq.~(\ref{eq_general_f-s_transform}). This helps us to establish a systematic
procedure for higher dimensional problems.

Within a properly chosen basis, a qubit pure dephasing, reduced from a system-environment arrangement, can be expressed in the same form as
Eq.~(\ref{eq_he_pure_dephasing}). Unlike the one resulting from ensemble average, the dephasing factor $\phi(t)=\exp[-i\theta(t)-\Phi(t)]$ is determined by the
system-environment interaction, where $\theta(t)$ ($\Phi(t)$) is a real odd (even) function on time $t$, respectively, such that $\phi(0)=1$,
$\vert\phi(t)\vert\leq1$, and $\phi(-t)=\phi^\ast(t)$. The dynamical linear map $\mathcal{E}^{(\tilde{\sigma})}_t$ can be constructed by applying
$\mathcal{E}_t\{\hat{\sigma}_m\}=\sum_{l=0}^{3}\hat{\sigma}_l [\mathcal{E}^{(\tilde{\sigma})}_t]_{lm}$ on each generator, where
$\hat{\sigma}_0=\widehat{I}$ is the identity and $\hat{\sigma}_{1,2,3}$ denotes the three Pauli matrices.

To find the CHER, we mean to find a (quasi-)distribution $\wp(\omega)$ encapsulated within the simulating HE
$\{(\wp(\omega),\widehat{H}_\omega=\omega\hat{\sigma}_z/2)\}_\omega$ satisfying
\begin{equation}
\mathcal{E}^{(\widetilde{\sigma})}_t=\int_\mathbb{R}\wp(\omega)e^{-i(\omega\tilde{\sigma}_z/2)t}d\omega.
\label{eq_qubit_deph_fst}
\end{equation}
The same conclusion $\exp[-i\theta(t)-\Phi(t)]=\int_\mathbb{R}\wp(\omega)e^{-i\omega t}d\omega$ is easily seen after diagonalizing Eq.~(\ref{eq_qubit_deph_fst})
(see Supplementary Note 4 for more details). Finally, performing the conventional inverse Fourier transform leads to the desired result $\wp(\omega)$.

To understand the deeper insight behind the diagonalization, we observe that the diagonalization changes the basis from the three pauli matrices into raising
and lowering operators and leaves $\hat{\sigma}_z$ invariant; namely, $\{\hat{\sigma}_+,\hat{\sigma}_-,\hat{\sigma}_z\}$, which are the generators of
$\mathfrak{sl}(2)$. In other words, they are the common ``eigenvectors'' of $\widehat{H}_\omega$ with ``eigenvalues'' $\pm1$ in the sense of the adjoint
representation, $\widetilde{H}_\omega(\hat{\sigma}_\pm)=[\omega\hat{\sigma}_z/2,\hat{\sigma}_\pm]=\pm1\cdot\omega\hat{\sigma}_\pm$ (see Supplementary Note 5 for
more details). The eigenvalues $\pm1$ are referred to as the roots (denoted by $\alpha_{1,2}$) associated to the root spaces $\mathrm{span}\{\hat{\sigma}_\pm\}$,
spanned by the operators $\hat{\sigma}_\pm$, respectively. However, for higher dimensional systems, the roots are no longer real scalars but vectors in an Euclidean
space. This can be better seen as follow.

A qutrit pure dephasing can be written as
\begin{equation}
\rho(t)=\mathcal{E}_t\{\rho_0\}=\left[
\begin{array}{ccc}
\rho_{11} & \rho_{12}\phi_1(t) & \rho_{13}\phi_4(t) \\
\rho_{21}\phi_2(t) & \rho_{22} & \rho_{23}\phi_6(t) \\
\rho_{31}\phi_5(t) & \rho_{32}\phi_7(t) & \rho_{33}
\end{array}
\right],
\end{equation}
To guarantee the Hermicity of $\rho(t)$, the dephasing factors must further satisfy $\phi_1(t)=\phi_2^\ast(t)$, and so on.

To expand $\rho$ as a nine-dimensional column vector, it is natural to use the Gell-Mann matrices (denoted by $\hat{\sigma}_m$, $m=1,\ldots,8$)
as the generators of $\mathfrak{su}(3)$. However, after
the diagonalization, the basis is changed into that of $\mathfrak{gl}(3)$ (e.g., $\widehat{K}_0=\widehat{I}$,
$\widehat{K}_1=\widehat{K}_2^\dagger=(\hat{\sigma}_1+i\hat{\sigma}_2)/2$, and $\widehat{K}_3=\widehat{L}_3=\hat{\sigma}_3$). Within this basis, the dynamical linear map
$\mathcal{E}^{(\widetilde{L})}_t$ is diagonalized, i.e., $\mathcal{E}_t\{\widehat{K}_m\}=\widehat{K}_m\phi_m(t)$.

In this case, we consider the member Hamiltonian $\widehat{H}_{\boldsymbol{\lambda}}=(\lambda_3\widehat{L}_3+\lambda_8\widehat{L}_8)/2\in\mathfrak{H}$
and $\boldsymbol{\lambda}=(\lambda_3,\lambda_8)\in\mathbb{R}^2$. After estimating all the commutators
$[\widehat{H}_{\boldsymbol{\lambda}},\widehat{K}_m]=(\boldsymbol{\alpha}_m\cdot\boldsymbol{\lambda})\widehat{K}_m$, we obtain its adjoint representation
$\widetilde{H}_{\boldsymbol{\lambda}}=(\lambda_3\widetilde{L}_3+\lambda_8\widetilde{L}_8)/2$, which is diagonal in the $\mathfrak{gl}(3)$ basis.

Finally, according to Eq.~(\ref{eq_general_f-s_transform}) $\mathcal{E}^{(\widetilde{L})}_t=\int_{\mathbb{R}^2}\wp(\lambda_3,\lambda_8)e^{-i\widetilde{H}_{\vec{\lambda}}t} d\lambda_3d\lambda_8$, we conclude that the
(quasi-)distribution $\wp(\lambda_3,\lambda_8)$ is governed by the following simultaneous Fourier transforms:
\begin{eqnarray}
\phi_1(t)&=&\int_{\mathbb{R}^2}\wp(\lambda_3,\lambda_8)e^{-i(\boldsymbol{\alpha}_1\cdot\boldsymbol{\lambda})t}d\lambda_3d\lambda_8 \label{eq_simu_qutrit_pure_deph_1}, \\
\phi_4(t)&=&\int_{\mathbb{R}^2}\wp(\lambda_3,\lambda_8)e^{-i(\boldsymbol{\alpha}_4\cdot\boldsymbol{\lambda})t}d\lambda_3d\lambda_8 \label{eq_simu_qutrit_pure_deph_4}, \\
\phi_6(t)&=&\int_{\mathbb{R}^2}\wp(\lambda_3,\lambda_8)e^{-i(\boldsymbol{\alpha}_6\cdot\boldsymbol{\lambda})t}d\lambda_3d\lambda_8 \label{eq_simu_qutrit_pure_deph_6}.
\end{eqnarray}

We can collect the six non-zero root vectors $\boldsymbol{\alpha}_m$. They are two-dimensional vectors of equal length in the $\lambda_3$-$\lambda_8$ plane forming
the root system R of $\mathfrak{su}(3)$. We plot them in Fig.~2. Further details are given in Supplementary Note 6.

Similarly, for $n$-dimensional pure dephasing, each member Hamiltonian $\widehat{H}_{\boldsymbol{\lambda}}$, taken from the $\mathfrak{H}$ of $\mathfrak{su}(n)$, possesses $n-1$ free parameters
$\boldsymbol{\lambda}=\{\lambda_{k^2-1}\}_{k=2,3,\ldots,n}$; meanwhile, the (quasi-)distribution $\wp(\boldsymbol{\lambda})$ is defined on the $(n-1)$-dimensional Euclidean space. Moreover, the action
of $\widehat{H}_{\boldsymbol{\lambda}}$ on the $n^2-n$ root spaces $\mathrm{span}\{\widehat{K}_m\}$ is described by the root system $\mathrm{R}=\{\boldsymbol{\alpha}_m\}_m$, consisting of $n^2-n$ real
vectors of $(n-1)$-dimension. Further properties of R reduce the complexity of our procedure (see \textbf{\textsf{Methods}}).

Consequently, combining the techniques, i.e., the adjoint representation, the Fourier transform on groups, and the root space decomposition, we can concisely formulate our procedure to find the CHER
$\wp(\boldsymbol{\lambda})$ for the $n$-dimensional pure dephasing. We restrict ourselves to the diagonal member Hamiltonians (in $\mathfrak{H}$) and establish its root system R. The
(quasi-)distribution $\wp(\boldsymbol{\lambda})$ is characterized by the $(n^2-n)/2$ Fourier transforms with respect to positive roots and its corresponding dephasing factor $\phi_m(t)$ associated to
the root space $\mathrm{span}\{\widehat{K}_m\}$:
\begin{equation}
\phi_m(t)=\int_{\mathbb{R}^{n-1}}\wp(\boldsymbol{\lambda})e^{-i(\boldsymbol{\alpha}_m\cdot\boldsymbol{\lambda})t}d^{n-1}\boldsymbol{\lambda},\mathrm{for~positive~roots}~\boldsymbol{\alpha}_m.
\label{eq_pure_deph_f-s_transform}
\end{equation}
Furthermore, the simple roots define a new set of random variables $x_m=\boldsymbol{\alpha}_m\cdot\boldsymbol{\lambda}$, for simple roots $\boldsymbol{\alpha}_m$, and their corresponding equations
define the marginals of $\wp(\boldsymbol{\lambda})$ along $x_m$. The other equations describe the correlations among $x_m$.

\begin{figure}[th]
\includegraphics[width=\columnwidth]{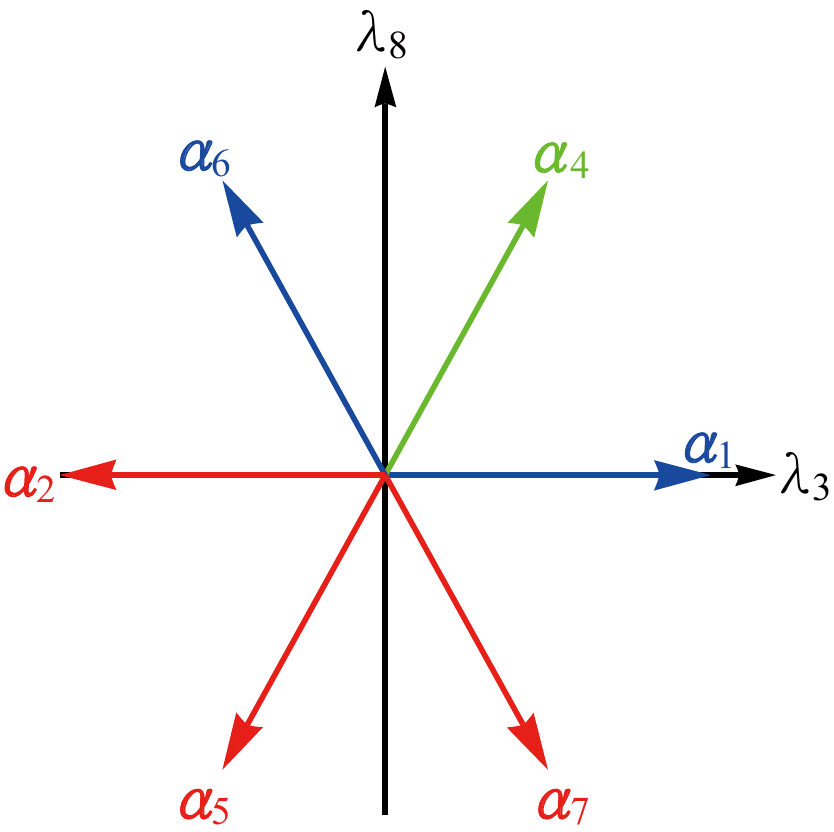}
\caption{The root system R of $\mathfrak{su}(3)$. It consists of six non-zero root vectors on the $\lambda_3$-$\lambda_8$ plane. Among them, $\boldsymbol{\alpha}_1$ (blue), $\boldsymbol{\alpha}_4$
(green), and $\boldsymbol{\alpha}_6$ (blue) are positive and the other three (red) are negative since roots are always come in pair with opposite directions.
Also, $\boldsymbol{\alpha}_1$ and $\boldsymbol{\alpha}_6$ are simple because $\boldsymbol{\alpha}_4=\boldsymbol{\alpha}_1+\boldsymbol{\alpha}_6$ is a combination of simple roots.}
\label{fig_roots-su(3)}
\end{figure}

\noindent\textbf{Example: qubit pair pure dephasing.} As an instructive paradigm demonstrating our procedure to find the CHER of pure dephasing, we consider
the extended spin-boson model consisting of a non-interacting qubit pair coupled to a common boson bath (Fig.~3\textbf{a}) with total Hamiltonian
$\widehat{H}_\mathrm{T}=\sum_{j=1,2}\omega_j\hat{\sigma}_{z,j}/2+\sum_{\mathbf{k}}\omega_{\mathbf{k}}\hat{b}_{\mathbf{k}}^\dagger\hat{b}_{\mathbf{k}}
+\sum_{j,\mathbf{k}}\hat{\sigma}_{z,j}\otimes(g_{j,\mathbf{k}}\hat{b}_{\mathbf{k}}^\dagger+g_{j,\mathbf{k}}^\ast\hat{b}_{\mathbf{k}})$. We now focus on the pure dephasing of
the qubit pair as a $4\times4$ system. The full dynamics has been given in Ref.~\cite{hongbin_process_n_cla_prl_2018}.

To simulate the qubit pair pure dephasing, the diagonal member Hamiltonian is taken from the $\mathfrak{H}$ of $\mathfrak{su}(4)$
$\widehat{H}_{\boldsymbol{\lambda}}=(\lambda_3\widehat{L}_3+\lambda_8\widehat{L}_8+\lambda_{15}\widehat{L}_{15})/2$ and $\wp(\boldsymbol{\lambda})$ is a (quasi-)distribution
over $\mathbb{R}^3$ space with $\lambda_3$, $\lambda_8$, and $\lambda_{15}$ being its axes. Note that the $\mathfrak{su}(4)$ has six
positive root vectors and three among them are simple, and all positive root vectors can be obtained by combining simple ones (Fig.~3\textbf{b}). We
perform the change of variables $x_m=\boldsymbol{\alpha}_m\cdot\boldsymbol{\lambda}$, $m=1,6,13$. Then, the (quasi-)distribution changes as
$\wp(\boldsymbol{\lambda})\mapsto\wp^\prime(x_1,x_6,x_{13})$. The three axes of $\wp^\prime(\mathbf{x})$ are defined by the three simple root vectors.

Additionally, since $\phi_6(t)=1$, by observing the special correspondences between root vectors and dephasing factors, we can assume that
\begin{equation}
\wp^\prime(\mathbf{x})=\wp_6(x_6)\wp_{1,13}(x_1,x_{13})
\end{equation}
is separable into two parties. The Fourier equation for $\phi_6(t)$ leads to the result that $\wp_6(x_6)=\delta(x_6)$ and those for $\phi_1(t)$ and
$\phi_{13}(t)$ specify the marginals of $\wp_{1,13}(x_1,x_{13})$ along the direction $\boldsymbol{\alpha}_1$ and $\boldsymbol{\alpha}_{13}$, respectively; meanwhile the one for
$\phi_9(t)$
\begin{equation}
\phi_9(t)=\int_{\mathbb{R}^2}\wp_{1,13}(x_1,x_{13})e^{-ix_1 t}e^{-ix_{13} t}dx_1dx_{13}
\label{eq_qubit_pair_depa_corr}
\end{equation}
describes the correlation between $x_1$ and $x_{13}$.

For the case of Ohmic
spectral density $\mathcal{J}(\omega)=\omega\exp(-\omega/\omega_\mathrm{c})$ in the zero-temperature limit, Eq.~(\ref{eq_qubit_pair_depa_corr}) can be recast into
a conventional two-dimensional Fourier transform by a simple ansatz. Then, $\wp_{1,13}(x_1,x_{13})$ can be easily obtained by conventional inverse transform and
the numerical result is shown in Fig.~3\textbf{c}. It exhibits manifest negative regions and illustrates the nonclassical nature of the qubit pair pure
dephasing. Detailed calculations are given in Supplementary Note 7.

Finally, having introducing our procedure to find the CHER, we combine it with the investigation on the intrinsic algebraic structure. Then
the uniqueness of the CHER for pure dephasing is intelligible and the detailed proof is given in Supplementary Note 8.

It is worthwhile to recall that similar models, in which several qubits were coupled identically to a common bath, had been considered
\cite{palma_dfs_prslsa_1996,duan_dfs_prl_1997,zanardi_dfs_prl_1997}, wherein the suppression of decoherence within certain Hilbert subspace had been discovered.
These studies spurred the development of the theory of decoherence-free-subspace \cite{lidar_dfs_prl_1998,lidar_review_dfs_2014}, which is conceived
as a promising solution to circumvent the obstacle of decohernece in quantum information science. The phenomenon of coherence-preserving can be observed in our
paradigm as well and is related to the delta component $\wp_6(x_6)=\delta(x_6)$ on $x_6$. Consequently, our procedure provides a potential application in
the detection of decoherence-free-subspace in terms of delta components in the (quasi-)distribution.

\begin{figure*}[th]
\includegraphics[width=\textwidth]{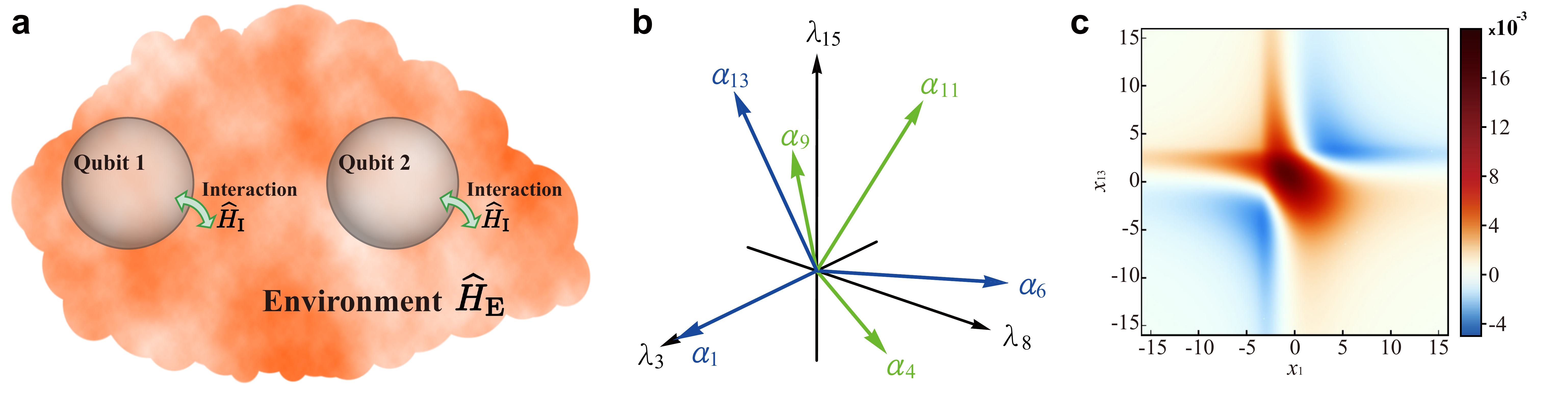}
\caption{Nonclassicality of the qubit pair pure dephasing. \textbf{a} A schematic illustration of our extended spin-boson model, describing a pair of non-interacting qubits coupled
to a common boson environment. \textbf{b} To simulate the qubit pair pure dephasing, $\wp(\boldsymbol{\lambda})$ is a (quasi-)distribution over $\mathbb{R}^3$ space spanned by $\lambda_3$,
$\lambda_8$, and $\lambda_{15}$. Here we show the six positive root vectors of $\mathfrak{su}(4)$. Three simple root vectors (blue) define a new set of random variables. The other
three non-simple root vectors (green) can be expressed as a combination of simple ones, e.g., $\boldsymbol{\alpha}_9=\boldsymbol{\alpha}_1+\boldsymbol{\alpha}_6+\boldsymbol{\alpha}_{13}$.
\textbf{c} The function $\wp_{1,13}(x_1,x_{13})$ distributes over the plane spanned by $x_1$ and $x_{13}$. For the case of Ohmic spectral density in the zero-temperature limit and
$\omega_\mathrm{c}=1$, it shows manifest negative regions and therefore indicates the nonclassicality of the qubit pair pure dephasing.}
\label{fig_ex_sbm}
\end{figure*}

\begin{figure*}[th]
\includegraphics[width=\textwidth]{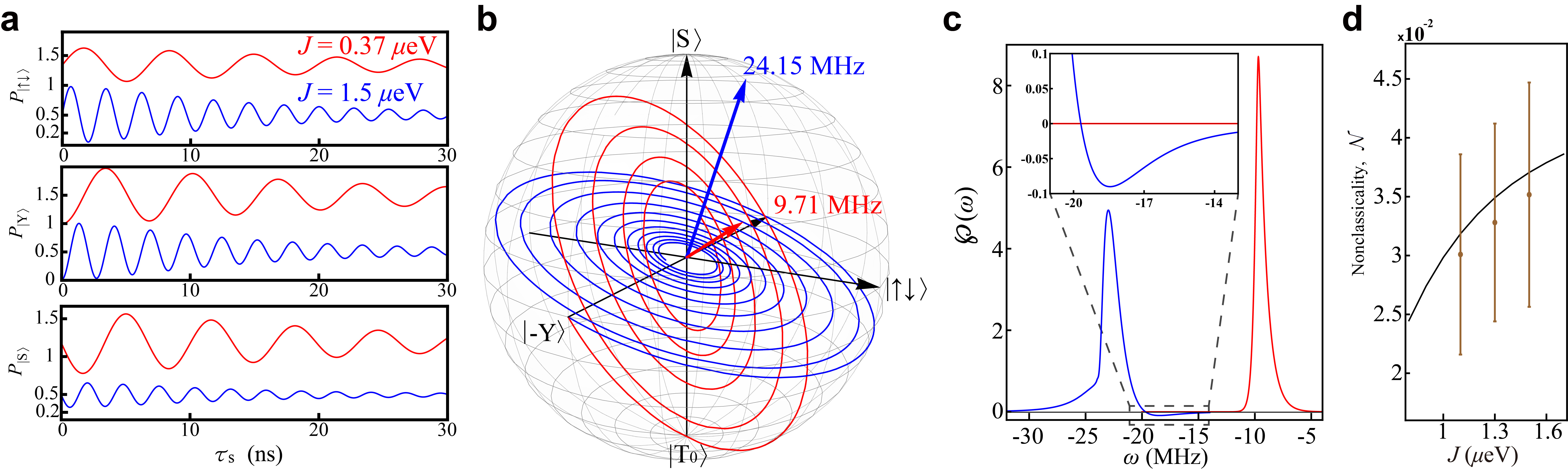}
\caption{Numerical simulation of the S-T$_0$ qubit pure dephasing. \textbf{a} The return probabilities $P_{|j\rangle}(\tau_\mathrm{s})$ are measured by projecting the states onto each axis after
a free induction decay time $\tau_\mathrm{s}$. Here we show two numerical simulations at different $J$ values. \textbf{b} The trajectories can be depicted in the Bloch sphere and the dynamics are therefore clearly visualized. The axes of rotation, as well as
the angle $\Omega$ between the $|\mathrm{S}\rangle$-axis, are identified by the normal vectors. \textbf{c} According to the rotation axes identified in (\textbf{b}), a unitary rotation
$\widehat{R}_\Omega$ recovers the standard form in Eq.~(\ref{eq_he_pure_dephasing}). Then our procedure is applicable. The resulting $\wp(\omega)$'s reflect
several physical intuitions, as explained in the main text. \textbf{d} The corresponding nonclassicality $\mathcal{N}$ at different $J$ can be estimated according to
Eq.~(\ref{eq_measure_nonclassicality}). It increases with $J$ in our simulation. In line with a realistic experimental modeling, statistically fluctuating noise is taken in account. The average
nonclassicalities (brown dots) are reduced due to the noise. The brown error bars are the standard deviations of the series of nonclassicality $\mathcal{N}$ values obtained
by repeatedly performing the noise simulation. More details are given in  Supplementary Note 9.}
\label{fig_ST0_qubit}
\end{figure*}

\noindent\textbf{Proposed experimental realization.} Finally, to underpin the practical feasibility of our approach, here we explain how to recover the dynamical
linear map $\mathcal{E}^{(\widetilde{L})}_t$ from the measurable $\chi$ matrix, which is a typical way to characterize arbitrary dynamics. The matrix elements
$\chi_{l,m}(t)$ are measured following the quantum process tomography technique, which has been applied in various architectures, e.g., optics
\cite{obrien_qpt_prl_2004,kiesel_qpt_prl_2005,pogorelov_qpt_pra_2017}, trapped ions \cite{riebe_qpt_prl_2006,monz_qpt_prl_2009}, and
superconductors \cite{bialczak_qpt_np_2010,yamamoto_qpt_prb_2010}.

Note that $\mathcal{E}^{(\widetilde{L})}_t$ on the left hand side of Eq.~(\ref{eq_general_f-s_transform}) describes the complete time evolution of the system, i.e., we need
to generate raw data of $\chi_{l,m}(t)$ as a time sequence. While this implies repeating the experiment for different time intervals, it does in principle not impose additional
technical difficulty. Finally, $\mathcal{E}^{(\widetilde{L})}_t$ can be reconstructed by combining the measured $\chi_{l,m}(t)$ (see \textbf{\textsf{Methods}}).

Here we also demonstrate a numerical simulation of the quantum state tomography experiment in the S-T$_0$ qubit \cite{petta_st0_qubit_science_2005,foletti_st0_qubit_nat_phys_2009}.
With spin relaxation on the order of milliseconds \cite{johnson_st0_relaxation_nature_2005}, the qubit dynamics is well approximated as pure dephasing
on the time scale of tens of nanoseconds. The qubit state is detected by measuring the return probabilities, i.e., projective measurements onto each axis of the Bloch sphere, after a
free induction decay time $\tau_\mathrm{s}$, as shown in Fig.~4\textbf{a} (see \textbf{\textsf{Methods}}).
With the measured return probabilities, we can depict the trajectories in the Bloch sphere (Fig.~4\textbf{b}).
This allows us to fully reconstruct the dynamics $\mathcal{E}_t$ of the qubit. Then applying our procedure outlined above, we can obtain the resulting $\wp(\omega)$ shown in Fig.~4\textbf{c}.
They reflect the fact that the $|\mathrm{S}\rangle$ possesses a lower eigenenergy than $|\mathrm{T}_0\rangle$, and the physical intuition that the shorter
the coherence time, the broader the $\wp(\omega)$.
Having recovered $\wp(\omega)$, the nonclassicality values can be estimated according to Eq.~(\ref{eq_measure_nonclassicality}).
To achieve realistic experimental conditions, we dress the theoretical model with statistical fluctuations (Fig.~4\textbf{d}). This confirms the robustness of the nonclassicality detection
against experimental errors (see Supplementary Note 9).
\vspace{12pt}

\noindent\textbf{\textsf{Discussion}}

\noindent
The studies on unveiling genuine quantum properties are very important since these discover the fundamental principle of nature and spur the growth of different
branches in physics and technologies. Particularly, in the field of quantum information science, highly quantum-correlated systems are critical resources for
prominent quantum information tasks which can hardly be accomplished efficiently by classical computers.

By genuine quantum properties, we refer to those that can never be resembled by classical strategies. For example, Bell's inequality is derived based on the
assumption of realism and locality, while the Wigner function explain a boson field in terms of classical phase space. Inspired by these works, our
characterization of process nonclassicality stems from the correspondence between the averaged dynamics of a HE and the dynamics reduced from a
system-environment arrangement \cite{hongbin_process_n_cla_prl_2018}.

By introducing the CHER, the role of classical strategy played by the simulating HE for a dynamics is even more apparent. The (quasi-)distribution is endowed with
an explanation in terms of a random rotation model. This also implies that the nonclassical properties of a dynamics can be well-characterized by a (quasi-)distribution.

Our main achievement here lies in the establishment of a constructive procedure to retrieve the (quasi-)distributions for pure dephasing of any dimension. Additionally,
along with the analysis of the underlying algebraic structure, we also achieve to prove its existence and uniqueness provided commuting member Hamiltonians.
Therefore, the CHER of pure dephasing is faithful. Accordingly, based on our studies, we propose a measure of nonclassicality of pure dephasing by comparing the
(quasi-)distributions in terms of variational distance. We also show that our measure is reasonable due to its convexity.

In order to make our procedure viable, we discuss how to implement our approach with the raw data measured by quantum process tomography.
Furthermore, we also demonstrate a numerical simulation of the S-T$_0$ qubit quantum state tomography, with which we implement our approach step by step.

Finally, let us remark that the generalization to the cases beyond pure dephasing or even nonunital dynamics invokes nonabelian algebraic structures. The
Baker-Campbell-Hausdorff formula is then required and therefore complicates the formulation here. On the other hand, our approach highlights an inherent difference between dephasing
and dissipative dynamics in terms of their underlying algebras. This may provide a new route toward the theory of open systems. Additionally, we also find that it would be interesting
to investigate how the notion of dynamical process nonclassicality is related to other quasi-distributions \cite{wootters_ann_phys_1987}.

($\equiv\widehat{\mathsf{\Phi}}\omega\widehat{\mathsf{\Phi}}\equiv$)$\sim$meow$\sim$ \newline

\vspace{12pt}

\noindent\textbf{\textsf{Methods}}

\noindent\textbf{Adjoint representation.} The adjoint representation is a particularly important tool in the theory of Lie algebra. It assigns each element in a
Lie algebra $\mathfrak{L}$ an endomorphism in $\mathfrak{gl}(\mathfrak{L})$ (i.e., a homomorphism from $\mathfrak{L}$ to itself) in terms of Lie bracket.
Therefore, $\mathfrak{gl}(\mathfrak{L})$ is a Lie algebra consisting of linear maps acting on $\mathfrak{L}$, wherein $\mathfrak{L}$ plays the role of a vector
space with the generators being its basis. The adjoint representation of each generator is constructed in terms of structure constants $c_{klm}$. See
Supplementary Note 1 for more details.

\noindent\textbf{Cartan subalgebra.} The structure of a Lie algebra $\mathfrak{L}$ is largely determined by its Lie bracket, i.e., the commutator acting on
$\mathfrak{L}$. A Lie algebra is said to be abelian if all its elements are mutually commutative. Let $\mathfrak{H}$ be a Lie subalgebra of $\mathfrak{L}$.
$\mathfrak{H}$ is said to be the CSA of $\mathfrak{L}$ if $\mathfrak{H}$ is the maximal abelian (and semisimple) subalgebra. A very important property is that,
for a Lie algebra consists of matrices, the elements in its CSA are all simultaneously diagonalizable for a suitably chosen basis.

In our case, to simulate pure dephasing dynamics, we deal with traceless and diagonal member Hamiltonians, taken from $\mathfrak{H}$ of $\mathfrak{su}(n)$. To be
noted, since the adjoint representation preserves the Lie bracket, the adjoint representation of $\mathfrak{H}$ is also a CSA of $\mathfrak{sl}(\mathfrak{u}(n))$.
However, even if $\mathfrak{H}$ is diagonal, its adjoint representation may not necessarily be diagonal as well since the generators of $\mathfrak{u}(n)$
are not the suitable basis for diagonalizing it.

\noindent\textbf{Root system.} For $n$-dimensional systems, there are $(n-1)$ generators in the $\mathfrak{H}$ of $\mathfrak{su}(n)$. Therefore, each member
Hamiltonian possesses $(n-1)$ parameters $\widehat{H}_{\boldsymbol{\lambda}}=\sum_{k=2}^n\lambda_{k^2-1}\widehat{L}_{k^2-1}/2$, with
$\{\widehat{L}_{k^2-1}\}_{k=2,3,\ldots,n}$ being the generators of $\mathfrak{H}$. Additionally, the $(n^2-n)$ roots $\boldsymbol{\alpha}_m$, associated to each root space
$\mathrm{span}\{\widehat{K}_m\}$, are $(n-1)$-dimensional vectors, forming the root system $\mathrm{R}=\{\boldsymbol{\alpha}_m\}_m$ of $\mathfrak{su}(n)$. Besides,
according to the theory of root space decomposition, the root system possesses the following critical properties: (1) the roots come in pairs in the sense that, if
$\boldsymbol{\alpha}_m$ is a root, then $-\boldsymbol{\alpha}_m$ is a root as well. This reduces the number of equations half since we are sufficient to consider the
positive roots alone. (2) Among the $(n^2-n)/2$ positive roots, $(n-1)$ simple roots provide the marginal of $\wp$ along different directions and the
others provide the information on the correlations between them. (3) For $\mathfrak{su}(n)$, the angle between any two non-pairing roots must be either $\pi/3$,
$\pi/2$, or $2\pi/3$. Furthermore, with the Fourier transform on groups, an $n$-dimensional pure dephasing is characterized by $(n^2-n)/2$ complex functions
$\phi_m(t)$, which are the dephasing factors associated to each root space $\mathrm{span}\{\widehat{K}_m\}$.

\noindent\textbf{Reconstructing $\mathcal{E}^{(\widetilde{L})}_t$ from the $\chi$ matrix.} In our approach, the reduced system dynamics is fully characterized by the dynamical
linear map $\mathcal{E}^{(\widetilde{L})}_t$, which is an $n^2\times n^2$ matrix acting on a state column vector $\rho=\{n^{-1},\boldsymbol{\rho}\}\in\mathbb{R}^{n^2}$.
On the other hand, in a quantum process tomography experiment, the dynamics is characterized by the measurable $\chi$ matrix representation, with the matrix elements defined
according to
\begin{equation}
\mathcal{E}_t\{\rho_0\}=\sum_{l,m=0}^{n^2-1}\chi_{l,m}(t)\widehat{L}_l\rho_0\widehat{L}_m.
\label{eq_chi_matrix_representation}
\end{equation}
Note that we have used the Hermiticity $\widehat{L}_m^\dagger=\widehat{L}_m$ in the above expression.

Now we demonstrate how to reconstruct $\mathcal{E}^{(\widetilde{L})}_t$ from the measured $\chi_{l,m}(t)$. For a given dynamics $\mathcal{E}_t$, the matrix elements
$[\mathcal{E}^{(\widetilde{L})}_t]_{jk}$ are defined by applying
\begin{equation}
\mathcal{E}_t\{\widehat{L}_k\}=\sum_{j=0}^{n^2-1}\widehat{L}_j[\mathcal{E}^{(\widetilde{L})}_t]_{jk}
\end{equation}
on each generator $\widehat{L}_k$. On the other hand, according to the measured $\chi_{l,m}(t)$ in Eq.~(\ref{eq_chi_matrix_representation}), we have
\begin{equation}
\mathcal{E}_t\{\widehat{L}_k\}=\sum_{l,m=0}^{n^2-1}\chi_{l,m}(t)\widehat{L}_l\widehat{L}_k\widehat{L}_m.
\end{equation}
From the above two equations, we can deduce that
\begin{equation}
[\mathcal{E}^{(\widetilde{L})}_t]_{jk}=\frac{1}{2}\sum_{l,m=0}^{n^2-1}\chi_{l,m}(t)\mathrm{Tr}\widehat{L}_j\widehat{L}_l\widehat{L}_k\widehat{L}_m,~j,k\neq0,
\end{equation}
\begin{equation}
[\mathcal{E}^{(\widetilde{L})}_t]_{j0}=\frac{1}{2}\sum_{l,m=0}^{n^2-1}\chi_{l,m}(t)\mathrm{Tr}\widehat{L}_j\widehat{L}_l\widehat{L}_m,~j\neq0,
\end{equation}
\begin{equation}
[\mathcal{E}^{(\widetilde{L})}_t]_{0k}=\frac{1}{n}\sum_{l,m=0}^{n^2-1}\chi_{l,m}(t)\mathrm{Tr}\widehat{L}_l\widehat{L}_k\widehat{L}_m,~k\neq0,
\end{equation}
and
\begin{equation}
[\mathcal{E}^{(\widetilde{L})}_t]_{00}=\chi_{0,0}(t)+\frac{2}{n}\sum_{l=1}^{n^2-1}\chi_{l,l}(t).
\end{equation}
In the above equations, we have used the facts that $\widehat{L}_0=\widehat{I}$ and $\mathrm{Tr}\widehat{L}^2_j=2$ for $j\neq0$.

\noindent\textbf{Recovering the S-T$_0$ trajectory from measured data.}
For a double-quantum-dot S-T$_0$ qubit, the three axes of the Bloch sphere are conventionally defined as $|\mathrm{X}\rangle=(|\mathrm{S}\rangle+|\mathrm{T}_0\rangle)/\sqrt{2}=|\uparrow\downarrow\rangle$,
$|\mathrm{Y}\rangle=(|\mathrm{S}\rangle-i|\mathrm{T}_0\rangle)/\sqrt{2}$, and $|\mathrm{Z}\rangle=|\mathrm{S}\rangle=(|\uparrow\downarrow\rangle-|\downarrow\uparrow\rangle)/\sqrt{2}$
being the singlet state, as shown in Fig.~4\textbf{b}. The free Hamiltonian in the S-T$_{0}$ basis is
\begin{equation}
\widehat{H}_{\mathrm{ST}_0}=\left[
\begin{array}{cc}
-J & g\mu_\mathrm{B}\Delta\mathrm{B}^z_{\mathrm{nuc}} \\
g\mu_\mathrm{B}\Delta\mathrm{B}^z_{\mathrm{nuc}} & 0
\end{array}
\right],
\end{equation}
where $J=0.37$ $\mu$eV (red) and $1.5$ $\mu$eV (blue) is the exchange energy between two dots, $\Delta \mathrm{B}^z_{\mathrm{nuc}}=10.5$ $\mathrm{mT}$ is the hyperfine
field gradient, $g=-0.44$ is the $g$-factor for GaAs, and $\mu_\mathrm{B}=57.8$ $\mu$eVT$^{-1}$ is Bohr's magneton.
Various kinds of initial states can be prepared with carefully designed pulse by controlling the voltage detuning between the quantum dots. After the initialization, the qubit undergoes a
free induction decay for a time period $\tau_\mathrm{s}$. Finally, projective measurements onto each axis are performed.

We numerically simulate the return probabilities $P_{|\uparrow\downarrow\rangle}(\tau_\mathrm{s})$, $P_{|\mathrm{Y}\rangle}(\tau_\mathrm{s})$, and $P_{|\mathrm{S}\rangle}(\tau_\mathrm{s})$
to each axis (Fig.~4\textbf{a}). Then the density matrix $\rho(\tau_\mathrm{s})=[\widehat{I}+\sum_{j=\mathrm{X},\mathrm{Y},\mathrm{Z}}r_j(\tau_\mathrm{s})\hat{\sigma}_j]/2$ can be determined by
\begin{equation}
r_j(\tau_\mathrm{s})=2P_{|j\rangle}(\tau_\mathrm{s})-1,~j=\mathrm{X},\mathrm{Y},\mathrm{Z}.
\end{equation}
And one can depict the trajectory $\mathbf{r}(\tau_\mathrm{s})=\{r_\mathrm{X}(\tau_\mathrm{s}),r_\mathrm{Y}(\tau_\mathrm{s}),r_\mathrm{Z}(\tau_\mathrm{s})\}$ in the Bloch sphere (Fig.~4\textbf{b}).
This helps us to identify the axis of rotation with bare rotation frequencies $\omega=\sqrt{J^2+(2g\mu_\mathrm{B}\Delta\mathrm{B}^z_{\mathrm{nuc}})^2}/\hbar$ and the angle $\Omega$
between the $|\mathrm{S}\rangle$-axis.

Finally, a unitary rotation $\widehat{R}_\Omega\rho(\tau_\mathrm{s})\widehat{R}_\Omega^\dagger$ with $\widehat{R}_\Omega=\exp[i\Omega\hat{\sigma}_\mathrm{Y}/2]$ recover the standard form
in Eq.~(\ref{eq_he_pure_dephasing}). Our procedure is then applicable and leads to
\begin{eqnarray}
&&\left[
\begin{array}{c}
1/2 \\
\hline
(r_\mathrm{X}(\tau_\mathrm{s})\cos\Omega-r_\mathrm{Z}(\tau_\mathrm{s})\sin\Omega)/2 \\
r_\mathrm{Y}(\tau_\mathrm{s})/2 \\
(r_\mathrm{X}(\tau_\mathrm{s})\sin\Omega+r_\mathrm{Z}(\tau_\mathrm{s})\cos\Omega)/2
\end{array}
\right]= \nonumber\\
&&\left[
\begin{array}{c|ccc}
1 & 0 & 0 & 0 \\
\hline
0 & \int\wp(\omega)\cos\omega\tau_\mathrm{s}d\omega & -\int\wp(\omega)\sin\omega\tau_\mathrm{s}d\omega & 0 \\
0 & \int\wp(\omega)\sin\omega\tau_\mathrm{s}d\omega & \int\wp(\omega)\cos\omega\tau_\mathrm{s}d\omega & 0 \\
0 & 0 & 0 & 1
\end{array}
\right]\cdot\left[
\begin{array}{c}
1/2 \\
\hline
0 \\
-1 \\
0
\end{array}
\right]. \nonumber\\
\end{eqnarray}
The numerical solutions are shown in Fig.~4\textbf{c}.

Further schematic illustration of the simulation and detailed analysis of the effects of noise are given in Supplementary Note 9.
\vspace{12pt}

\noindent\textbf{\textsf{Data availability}}

\noindent The data analyzed during the current study are available from the corresponding authors on reasonable request.
\vspace{12pt}


%
\vspace{12pt}

\noindent\textbf{\textsf{Acknowledgements}}

\noindent This work is supported partially by the National Center for Theoretical Sciences
and Ministry of Science and Technology, Taiwan, Grants No. MOST 107-2628-M-006-002-MY3, MOST 107-2627-E-006-001,
MOST 106-2811-M-006-044, MOST 107-2811-M-006-017, and MOST 107-2811-M-009-527,
and Army Research Office (Grant No. W911NF-19-1-0081).
J.B. is supported by an Institute of Information and Communications Technology Promotion (IITP) grant funded by the Korean government (MSIP) (Grant No. 2019-0-00831, EQGIS),
the KIST Institutional Program (2E29580-19-148), and ITRC Program(IITP2018-2019-0-01402).
F.N. is supported in part by the MURI Center for Dynamic Magneto-Optics via the
Air Force Office of Scientific Research (AFOSR) (FA9550-14-1-0040),
Army Research Office (ARO) (Grant No. W911NF-18-1-0358),
Asian Office of Aerospace Research and Development (AOARD) (Grant No. FA2386-18-1-4045),
Japan Science and Technology Agency (JST) (Q-LEAP program and CREST Grant No. JPMJCR1676),
Japan Society for the Promotion of Science (JSPS) (JSPS-RFBR Grant No. 17-52-50023, and JSPS-FWO Grant No. VS.059.18N),
RIKEN-AIST Challenge Research Fund,
and the John Templeton Foundation.
\vspace{12pt}

\noindent\textbf{\textsf{Author contributions}}

\noindent H.-B.C. conceived the research and carried out the calculations,
with help from P.-Y.L. and C.G., under the supervision of Y.-N.C.
J.B. proposed the idea of variational distance.
Y.-N.C. and F.N. were responsible for the integration among different research units.
All authors contributed to the discussion of the central ideas and to the manuscript.
\vspace{12pt}

\noindent\textbf{\textsf{Additional information}}

\noindent\textbf{Supplementary Information} accompanies this paper at https://doi.org/10.1038/s41467-019-11502-4.

\noindent\textbf{Competing interests:} The authors declare no competing interests.

\newpage
\clearpage

\newpage
\onecolumngrid
\clearpage
\makeatletter
\renewcommand*{\fnum@figure}{{\normalfont Supplementary Figure~\thefigure}}
\renewcommand{\thesection}{Supplementary Note \arabic{section}}
\setcounter{equation}{0}
\setcounter{figure}{0}
\setcounter{secnumdepth}{3}
\begin{center}
{\bf \Large Supplementary Information---Quantifying the nonclassicality of pure dephasing}
\end{center}

\noindent{\Large Chen et al.}

\newpage

\twocolumngrid

\section{MATHEMATICAL SUPPLEMENTS ON LIE ALGEBRA}

In this work, many results rely heavily on the techniques of Lie algebras. For the accessibility to a wide audience in physics, we provide some supplements on Lie
algebras.

\subsection{$\mathfrak{u}(n)$ Lie algebra}

Since both Hamiltonians $\widehat{H}_\lambda$ and density matrices $\rho$ are Hermitian, it is natural to deal with the problem in the space
$\mathfrak{u}(n)=\mathfrak{u}(1)\oplus\mathfrak{su}(n)$, which is spanned by the identity $\{\widehat{I}\}$ and $\{\widehat{L}_m\}_m$ of $n^2-1$ traceless Hermitian
generators, respectively. Every member Hamiltonian $\widehat{H}_\lambda$ is an element in $\mathfrak{u}(n)$, and can be expressed as a linear
combination of the generators
\begin{equation}
\widehat{H}_\lambda=\lambda_0\widehat{I}+\sum_{m=1}^{n^2-1} \lambda_m\widehat{L}_m=\lambda_0\widehat{I}+\boldsymbol{\lambda}\cdot\widehat{\mathbf{L}},
\end{equation}
where $\lambda_0\in\mathbb{R}$ and $\boldsymbol{\lambda}=\{\lambda_m\}_m\in\mathbb{R}^{n^2-1}$. Namely, $\lambda=\{\lambda_0,\boldsymbol{\lambda}\}\in\mathbb{R}^{n^2}$
parametrizes the member Hamiltonian $\widehat{H}_\lambda$. Additionally, since $\mathfrak{u}(1)$ commutes with $\mathfrak{su}(n)$
(i.e., $[\mathfrak{u}(1),\mathfrak{su}(n)]=[\lambda_0\widehat{I},\boldsymbol{\lambda}\cdot\widehat{\mathbf{L}}]=0$
$\forall\lambda_0\in\mathbb{R},\boldsymbol{\lambda}\in\mathbb{R}^{n^2-1}$), this renders $\lambda_0$ playing no role in each single realization of the unitary evolution:
\begin{equation}
\exp[-i\widehat{H}_\lambda t]\rho\exp[i\widehat{H}_\lambda t]=\exp[-i\boldsymbol{\lambda}\cdot\widehat{\mathbf{L}}t]\rho\exp[i\boldsymbol{\lambda}\cdot\widehat{\mathbf{L}}t].
\label{sup_eq_single_realization_uni_evo}
\end{equation}
Therefore, we first consider $\mathfrak{su}(n)$, and the space $\mathfrak{u}(1)$ can be easily included latter.

In Lie algebras, $\mathfrak{su}(n)$ itself is a vector space, and equipped with a bilinear Lie bracket
\begin{equation}
[\quad,\quad]:\mathfrak{su}(n)\times\mathfrak{su}(n)\rightarrow\mathfrak{su}(n),
\end{equation}
satisfying the following properties
\begin{enumerate}
\item $[\widehat{H}_{\boldsymbol{\lambda}},\widehat{H}_{\boldsymbol{\lambda}}]=0$,~$\forall\widehat{H}_{\boldsymbol{\lambda}}\in\mathfrak{su}(n)$.
\item $[\widehat{H}_1,[\widehat{H}_2,\widehat{H}_3]]+[\widehat{H}_2,[\widehat{H}_3,\widehat{H}_1]]+[\widehat{H}_3,[\widehat{H}_1,\widehat{H}_2]]=0$, $\forall\widehat{H}_{\boldsymbol{\lambda}}\in\mathfrak{su}(n)$.
\end{enumerate}
The Lie bracket largely determines the structure of a Lie algebra. This can be understood by applying it to the generators. For $\mathfrak{su}(n)$, the generators
satisfy
\begin{equation}
[\widehat{L}_k,\widehat{L}_l]=i2c_{klm}\widehat{L}_m
\end{equation}
and the $c_{klm}$'s are called the structure constants, which satisfy
\begin{equation}
c_{klm}=-c_{lkm}=-c_{mlk},
\label{eq_structure_constants}
\end{equation}
for $\mathfrak{su}(n)$.

\subsection{Representation}

To acquire further insight of an abstract Lie algebra, one seminal approach is to link it to another easier one; meanwhile, its algebraic structure can be
preserved. This can be achieved by introducing the concepts of homomorphism and representation.
\begin{defi}[Lie algebra homomorphism]
Let $\mathfrak{L}$ and $\mathfrak{L}^\prime$ be two Lie algebras over the same field $\mathcal{F}$. A linear map $f:\mathfrak{L}\rightarrow \mathfrak{L}^\prime$
is a homomorphism if it preserves the Lie brackets:
\begin{equation}
f([\widehat{H}_1,\widehat{H}_2])=[f(\widehat{H}_1),f(\widehat{H}_2)], \forall\widehat{H}_\lambda\in\mathfrak{L}.
\end{equation}
A homomorphism is an isomorphism, if it is injective and surjective in the sense of linear maps.
\end{defi}

\begin{defi}[Representation of a Lie algebra]
Let $\mathfrak{L}$ be a Lie algebra over a field $\mathcal{F}$. A representation of $\mathfrak{L}$ is a Lie algebra homomorphism $f$
\begin{equation}
f:\mathfrak{L}\rightarrow \mathfrak{gl}(\mathcal{V}),
\end{equation}
where $\mathfrak{gl}(\mathcal{V})$ is the general linear algebra of endomorphisms on the vector space $\mathcal{V}$.
\end{defi}

Therefore a representation $f$ assigns each $\widehat{H}_\lambda\in\mathfrak{L}$ an endomorphism $f(\widehat{H}_\lambda):\mathcal{V}\rightarrow\mathcal{V}$,
depending linearly on $\widehat{H}_\lambda$ and preserving Lie brackets.

\subsection{Adjoint representation}

A particularly important representation in the Lie algebra theory is the adjoint representation
\begin{equation}
\mathrm{ad}:\mathfrak{L}\rightarrow\mathfrak{gl}(\mathfrak{L}),
\end{equation}
with
\begin{equation}
\mathrm{ad}:\widehat{H}_\lambda\mapsto\widetilde{H}_\lambda=[\widehat{H}_\lambda,\quad].
\end{equation}
In other words, the adjoint representation conceives each $\widehat{H}_\lambda\in\mathfrak{L}$ as an endomorphism
$\mathrm{ad}\widehat{H}_\lambda=\widetilde{H}_\lambda$ acting on $\mathfrak{L}$, and its action is implemented by the Lie bracket
$\widetilde{H}_\lambda(\widehat{H}_{\lambda^\prime})=[\widehat{H}_\lambda,\widehat{H}_{\lambda^\prime}]$.

Since a Lie algebra $\mathfrak{L}$ itself is a vector space, this allows one to express each element $\widetilde{H}_\lambda\in\mathfrak{gl}(\mathfrak{L})$ in
terms of a matrix with respect to the generator of $\mathfrak{L}$. For $\mathfrak{su}(n)$, the adjoint representation $\widetilde{L}_m$ of each generator
$\widehat{L}_m$ is constructed in terms of structure constants $c_{klm}$.

For example, one generically takes the generators of $\mathfrak{su}(2)$ to be the Pauli matrices, which satisfy the commutation relation
\begin{equation}
[\hat{\sigma}_k,\hat{\sigma}_l]=i2\varepsilon_{klm}\hat{\sigma}_m
\end{equation}
cyclically. Therefore, the adjoint representation of the Pauli matrices are given by
\begin{eqnarray}
\tilde{\sigma}_x&=&\left[
\begin{array}{ccc}
0 & 0 & 0 \\
0 & 0 & -i2\\
0 & i2& 0
\end{array}
\right], \\
\tilde{\sigma}_y&=&\left[
\begin{array}{ccc}
0 & 0 & i2 \\
0 & 0 & 0 \\
-i2& 0 & 0
\end{array}
\right], \\
\tilde{\sigma}_z&=&\left[
\begin{array}{ccc}
0 &-i2& 0 \\
i2& 0 & 0 \\
0 & 0 & 0
\end{array}
\right].
\end{eqnarray}
And any element $\widehat{H}_{\boldsymbol{\lambda}}=\lambda_x\hat{\sigma}_x+\lambda_y\hat{\sigma}_y+\lambda_z\hat{\sigma}_z\in\mathfrak{su}(2)$ has a represention
\begin{equation}
\widetilde{H}_{\boldsymbol{\lambda}}=\boldsymbol{\lambda}\cdot\tilde{\boldsymbol{\sigma}}=\left[
\begin{array}{ccc}
0 & -i2\lambda_z & i2\lambda_y \\
i2\lambda_z & 0 & -i2\lambda_x\\
-i2\lambda_y & i2\lambda_x& 0
\end{array}
\right].
\end{equation}

Since $\mathfrak{u}(1)$ commutes with $\mathfrak{su}(2)$, one can easily extend the representation to the space
$\mathfrak{u}(2)=\mathfrak{u}(1)\oplus\mathfrak{su}(2)$, such that for any $\widehat{H}_\lambda=\lambda_0\widehat{I}+\boldsymbol{\lambda}\cdot\hat{\boldsymbol{\sigma}}\in\mathfrak{u}(2)$,
its adjoint representation is explicitly written as
\begin{equation}
\widetilde{H}_\lambda=\left[
\begin{array}{c|ccc}
0 & 0 & 0 & 0 \\
\hline
0 & 0 & -i2\lambda_z & i2\lambda_y \\
0 & i2\lambda_z & 0 & -i2\lambda_x\\
0 & -i2\lambda_y & i2\lambda_x& 0
\end{array}
\right].
\label{eq_adj_rep_su2}
\end{equation}
Notice that $\widetilde{H}_\lambda$ is independent of $\lambda_0$. This reflects the fact that each single unitary evolution in
Supplementary Equation~(\ref{sup_eq_single_realization_uni_evo}) has no $\lambda_0$ dependence.

Similarly, for the general cases, every $\widehat{H}_\lambda=\lambda_0\widehat{I}+\boldsymbol{\lambda}\cdot\widehat{\mathbf{L}}\in\mathfrak{u}(n)$ has a representation
\begin{equation}
\widetilde{H}_\lambda=\lambda_0\widetilde{I}+\boldsymbol{\lambda}\cdot\widetilde{\mathbf{L}}=\left[
\begin{array}{c|ccc}
0 & 0 & \cdots & 0  \\
\hline
0 &   &        &    \\
\vdots & & \boldsymbol{\lambda}\cdot\widetilde{\mathbf{L}} & \\
0 &   &        &
\end{array}
\right],
\end{equation}
which is also independent of $\lambda_0$.

\section{PROOF OF EQ.~(3) IN THE MAIN TEXT}

Here we present the translation from Eq.~(1) into Eq.~(3) in the main text. We begin with a useful tool.
\begin{lem}
Let $\widehat{L}$ and $\widehat{M}$ be any elements in the general linear group GL($n$) of $n\times n$ matrices. Then we have the following relation
\begin{equation}
\exp[\widehat{L}]\widehat{M}\exp[-\widehat{L}]=\sum_{\mu=0}^\infty\frac{1}{\mu!}[\widehat{L},\widehat{M}]_{(\mu)},
\end{equation}
where
\begin{eqnarray}
[\widehat{L},\widehat{M}]_{(0)}&=&\widehat{M}, \nonumber
\end{eqnarray}
\begin{eqnarray}
[\widehat{L},\widehat{M}]_{(1)}&=&[\widehat{L},\widehat{M}], \nonumber
\end{eqnarray}
\begin{eqnarray}
[\widehat{L},\widehat{M}]_{(\mu)}&=&[\widehat{L},[\widehat{L},\widehat{M}]_{(\mu-1)}].
\end{eqnarray}
\end{lem}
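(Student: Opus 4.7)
The plan is to reduce the identity to a first-order matrix differential equation by introducing a one-parameter family. Define $f(t) = e^{t\widehat{L}} \widehat{M} e^{-t\widehat{L}}$ for $t \in \mathbb{R}$, noting that $f(0) = \widehat{M}$ and that the identity we wish to prove is precisely the value $f(1)$. Since $\widehat{L}$ commutes with $e^{\pm t\widehat{L}}$, differentiating via the product rule gives
\begin{equation}
f'(t) = \widehat{L}\, e^{t\widehat{L}} \widehat{M} e^{-t\widehat{L}} - e^{t\widehat{L}} \widehat{M} e^{-t\widehat{L}}\, \widehat{L} = [\widehat{L}, f(t)].
\end{equation}

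Next, I would iterate this relation. A straightforward induction on $\mu$ establishes $f^{(\mu)}(t) = \mathrm{ad}_{\widehat{L}}^{\,\mu}\bigl(f(t)\bigr)$, where $\mathrm{ad}_{\widehat{L}}(\bullet) = [\widehat{L}, \bullet]$ is the same adjoint action that appears throughout the paper. Evaluating at $t=0$ then yields $f^{(\mu)}(0) = [\widehat{L}, \widehat{M}]_{(\mu)}$ in the nested-commutator notation of the lemma. The final step is to Taylor-expand $f$ about $t=0$ and evaluate at $t=1$, which gives
\begin{equation}
f(1) = \sum_{\mu=0}^{\infty} \frac{1}{\mu!} f^{(\mu)}(0) = \sum_{\mu=0}^{\infty} \frac{1}{\mu!} [\widehat{L},\widehat{M}]_{(\mu)},
\end{equation}
exactly the claimed formula.

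The only genuine issue is convergence and the legitimacy of term-by-term differentiation. However, because we are working in the finite-dimensional algebra of $n\times n$ matrices, any submultiplicative norm obeys $\|[\widehat{L},\widehat{M}]_{(\mu)}\| \le (2\|\widehat{L}\|)^{\mu}\,\|\widehat{M}\|$, so the Taylor series converges absolutely and uniformly on compact $t$-intervals; hence $f$ is real-analytic and equals its Taylor series everywhere, in particular at $t=1$. An alternative route is to multiply the power series $e^{t\widehat{L}} = \sum_{k}(t^{k}/k!)\widehat{L}^{k}$ and $e^{-t\widehat{L}} = \sum_{j}((-t)^{j}/j!)\widehat{L}^{j}$ directly and collect terms via the combinatorial identity $[\widehat{L},\widehat{M}]_{(\mu)} = \sum_{k=0}^{\mu}\binom{\mu}{k}(-1)^{\mu-k}\widehat{L}^{k}\widehat{M}\widehat{L}^{\mu-k}$, but the ODE approach above is shorter and avoids reindexing the double sum.
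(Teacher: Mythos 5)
Your proof is correct, but it takes a different route from the one the paper sketches. The paper's argument is the direct one you mention only as an alternative at the end: expand both exponentials as power series, multiply, and collect terms using the combinatorial identity $[\widehat{L},\widehat{M}]_{(\mu)}=\sum_{k=0}^{\mu}\binom{\mu}{k}(-1)^{\mu-k}\widehat{L}^{k}\widehat{M}\widehat{L}^{\mu-k}$, which requires reindexing a double sum along anti-diagonals. Your primary argument instead interpolates with $f(t)=e^{t\widehat{L}}\widehat{M}e^{-t\widehat{L}}$, derives the ODE $f'(t)=[\widehat{L},f(t)]=\mathrm{ad}_{\widehat{L}}(f(t))$, and reads the result off the Taylor expansion at $t=0$ evaluated at $t=1$; this is cleaner, makes the appearance of $\mathrm{ad}_{\widehat{L}}^{\mu}$ conceptually transparent (it is literally the exponential of the adjoint action, which is the structural fact the paper exploits in Supplementary Note 2), and your norm bound $\|[\widehat{L},\widehat{M}]_{(\mu)}\|\le(2\|\widehat{L}\|)^{\mu}\|\widehat{M}\|$ disposes of convergence rigorously — something the paper's one-line sketch leaves implicit. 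The direct expansion, for its part, avoids any appeal to analyticity or differentiation of operator-valued functions. One cosmetic remark: the hypothesis that $\widehat{L},\widehat{M}$ lie in the general linear \emph{group} is not needed in either argument; the identity holds for arbitrary $n\times n$ matrices, and neither your proof nor the paper's uses invertibility.
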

This lemma can be proven by straightforwardly expanding $\exp[\pm\widehat{L}]$ with its Taylor series. And an elementary algebra leads to the desired result.

With this lemma, a single realization of the unitary evolution in Eq.~(1) in the main text can be rewritten as
\begin{equation}
\exp[-i\widehat{H}_\lambda t]\rho\exp[i\widehat{H}_\lambda t]=\sum_{\mu=0}^\infty\frac{(-it)^\mu}{\mu!}[\widehat{H}_\lambda,\rho]_{(\mu)}.
\label{sup_eq_single_unitary_expansion}
\end{equation}
One can observe that the right hand side of Supplementary Equation~(\ref{sup_eq_single_unitary_expansion}) resembles the Taylor series of an exponential. To further recast it into
a closed exponential form, we must make use of the adjoint representation of the $\mathfrak{u}(n)$ Lie algebra we have discussed.

A density matrix $\rho$ is also Hermitian and of unital trace; it can be expressed in terms of $\rho=n^{-1}\widehat{I}+\boldsymbol{\rho}\cdot\widehat{\mathbf{L}}$, with
$\boldsymbol{\rho}\in\mathbb{R}^{n^2-1}$. One can conceive $\rho=\{n^{-1},\boldsymbol{\rho}\}$ as an $n^2$-dimensional column vector, then the action of the commutator
$[\widehat{H}_\lambda,\rho]$ can be expressed in terms of conventional matrix multiplication:
\begin{equation}
[\widehat{H}_\lambda,\rho]=\widetilde{H}_\lambda\cdot\rho=
\left[
\begin{array}{c|ccc}
0 & 0 & \cdots & 0  \\
\hline
0 &   &        &    \\
\vdots & & \boldsymbol{\lambda}\cdot\widetilde{\mathbf{L}} & \\
0 &   &        &
\end{array}
\right]
\cdot
\left[
\begin{array}{c}
n^{-1} \\
\hline
\\
\boldsymbol{\rho} \\
\\
\end{array}
\right],
\end{equation}
and therefore
\begin{equation}
[\widehat{H}_\lambda,\rho]_{(\mu)}=(\widetilde{H}_\lambda)^{\mu}\cdot\rho.
\end{equation}

Consequently, the exponential form of Supplementary Equation~(\ref{sup_eq_single_unitary_expansion}) follows immediately
\begin{eqnarray}
\exp[-i\widehat{H}_\lambda t]\rho\exp[i\widehat{H}_\lambda t]&=&\sum_{\mu=0}^\infty\frac{(-it)^\mu}{\mu!}(\widetilde{H}_\lambda)^{\mu}\cdot\rho \nonumber\\
&=&\exp[-i\widetilde{H}_\lambda t]\cdot\rho.
\end{eqnarray}
Then, given a time-independent HE $\{(p_\lambda,\widehat{H}_\lambda)\}$, it determines an unital and trace-preserving dynamical linear map
$\mathcal{E}^{(\widetilde{L})}_t$ via the Fourier transform on group:
\begin{equation}
\mathcal{E}^{(\widetilde{L})}_t=\int_{\mathbb{R}^{n^2}} p_\lambda e^{-i\widetilde{H}_\lambda t}d\lambda
=\int_{\mathbb{R}^{n^2}} p_\lambda e^{-i\lambda\widetilde{L}t}d\lambda,
\end{equation}
provided $\widetilde{H}_\lambda=\lambda_0\widetilde{I}+\boldsymbol{\lambda}\cdot\widetilde{\mathbf{L}}$. Notice that $\lambda=\{\lambda_0,\boldsymbol{\lambda}\}\in\mathbb{R}^{n^2}$ and
$\widetilde{I}=0$.

\section{CONVEXITY OF VARIATIONAL DISTANCE MEASURE}

We first note that the set of all legitimate probability distributions is convex since any statistical mixture of probability distributions is again a probability
distribution.

Suppose that we are given two pure dephasing dynamics $\mathcal{E}_t^1$ and $\mathcal{E}_t^2$ with $\wp_\lambda^1$ and $\wp_\lambda^2$ being their
(quasi-)distributions, respectively. According to our measure of nonclassicality, we have
\begin{eqnarray}
&&a\mathcal{N}\{\mathcal{E}_t^1\}+(1-a)\mathcal{N}\{\mathcal{E}_t^2\} \nonumber\\
&&=a\inf_{p_\lambda}\int_\mathcal{G}\frac{1}{2}\vert\wp^1_\lambda-p_\lambda\vert d\lambda
+(1-a)\inf_{p_\lambda}\int_\mathcal{G}\frac{1}{2}\vert\wp^2_\lambda-p_\lambda\vert d\lambda. \nonumber\\
\end{eqnarray}
Suppose that the two infimums are achieved by $p^1_\lambda$ and $p^2_\lambda$, respectively, we then have
\begin{eqnarray}
&&a\mathcal{N}\{\mathcal{E}_t^1\}+(1-a)\mathcal{N}\{\mathcal{E}_t^2\} \nonumber\\
&&=a\int_\mathcal{G}\frac{1}{2}\vert\wp^1_\lambda-p^1_\lambda\vert d\lambda+(1-a)\int_\mathcal{G}\frac{1}{2}\vert\wp^2_\lambda-p^2_\lambda\vert d\lambda \nonumber\\
&&\geq\int_\mathcal{G}\frac{1}{2}\vert a\wp^1_\lambda+(1-a)\wp^2_\lambda-[a p^1_\lambda+(1-a)p^2_\lambda]\vert d\lambda \nonumber\\
&&\geq\inf_{p_\lambda}\int_\mathcal{G}\frac{1}{2}\vert a\wp^1_\lambda+(1-a)\wp^2_\lambda-p_\lambda\vert d\lambda \nonumber\\
&&=\mathcal{N}\{a\mathcal{E}_t^1+(1-a)\mathcal{E}_t^2\}.
\end{eqnarray}
Therefore, our measure of nonclassicality is convex.

\section{FINDING THE CHER OF QUBIT PURE DEPHASING}

Within a properly chosen basis of its associated Hilbert space, any qubit pure dephasing dynamics can be expressed as
\begin{equation}
\rho_0=\left[
\begin{array}{cc}
\rho_{\uparrow\uparrow} & \rho_{\uparrow\downarrow} \\
\rho_{\downarrow\uparrow} & \rho_{\downarrow\downarrow}
\end{array}
\right]\mapsto
\mathcal{E}_t\{\rho_0\}=\left[
\begin{array}{cc}
\rho_{\uparrow\uparrow} & \rho_{\uparrow\downarrow}\phi(t) \\
\rho_{\downarrow\uparrow}\phi^\ast(t) & \rho_{\downarrow\downarrow}
\end{array}
\right].
\end{equation}
The diagonal elements are constant in time and the off-diagonal elements are governed by the dephasing factor $\phi(t)=\exp[-i\theta(t)-\Phi(t)]$, where
$\theta(t)$ ($\Phi(t)$) is a real odd (even) function on time $t$, respectively, such that $\phi(0)=1$, $\vert\phi(t)\vert\leq1$ for all $t\in\mathbb{R}$, and
$\phi(-t)=\phi^\ast(t)$. The first two conditions are for the complete positivity of the dynamics and the last one guarantees that the (quasi-)distribution $\wp$
is a real function.

If we expand $\rho$ in terms of $\rho=2^{-1}\widehat{I}+\boldsymbol{\rho}\cdot\hat{\boldsymbol{\sigma}}$, where $\hat{\boldsymbol{\sigma}}=\{\hat{\sigma}_x,\hat{\sigma}_y,\hat{\sigma}_z\}$
denotes three Pauli matrices, a qubit initial state can be expressed as a four-dimensional column vector
\begin{equation}
\rho_0=\left[
\begin{array}{c}
1/2 \\
\hline
(\rho_{\uparrow\downarrow}+\rho_{\downarrow\uparrow})/2 \\
i(\rho_{\uparrow\downarrow}-\rho_{\downarrow\uparrow})/2 \\
(\rho_{\uparrow\uparrow}-\rho_{\downarrow\downarrow})/2
\end{array}
\right].
\end{equation}

Now we know the action of $\mathcal{E}_t$ on a state, its linear map form $\mathcal{E}^{(\widetilde{\sigma})}_t$ can be constructed by applying it to the
generators:
\begin{enumerate}
\item $\mathcal{E}_t\{\widehat{I}\}=\widehat{I}$.
\item $\mathcal{E}_t\{\hat{\sigma}_x\}=e^{-\Phi(t)}\cos\theta(t)\hat{\sigma}_x+e^{-\Phi(t)}\sin\theta(t)\hat{\sigma}_y$.
\item $\mathcal{E}_t\{\hat{\sigma}_y\}=-e^{-\Phi(t)}\sin\theta(t)\hat{\sigma}_x+e^{-\Phi(t)}\cos\theta(t)\hat{\sigma}_y$.
\item $\mathcal{E}_t\{\hat{\sigma}_z\}=\hat{\sigma}_z$.
\end{enumerate}
We then have the dynamical linear map:
\begin{equation}
\mathcal{E}^{(\widetilde{\sigma})}_t=\left[
\begin{array}{c|ccc}
1 & 0 & 0 & 0 \\
\hline
0 & e^{-\Phi(t)}\cos\theta(t) & -e^{-\Phi(t)}\sin\theta(t) & 0 \\
0 & e^{-\Phi(t)}\sin\theta(t) & e^{-\Phi(t)}\cos\theta(t) & 0 \\
0 & 0 & 0 & 1
\end{array}
\right].\label{sup_eq_qubit_deph_dyn_map}
\end{equation}

On the other hand, the adjoint representation of $\hat{\sigma}_z$ (including the generator $\widehat{I}$ of $\mathfrak{u}(1)$) reads
\begin{equation}
\tilde{\sigma}_z=\left[
\begin{array}{c|ccc}
0 & 0 & 0 & 0 \\
\hline
0 & 0 &-i2& 0 \\
0 & i2& 0 & 0 \\
0 & 0 & 0 & 0
\end{array}
\right].
\end{equation}
The right-hand side of Eq.~(5) in the main text reads
\begin{eqnarray}
&&\int_{\mathbb{R}}\wp(\omega)e^{-i(\omega\tilde{\sigma}_z/2)t}d\omega= \nonumber\\
&&\left[
\begin{array}{c|ccc}
1 & 0 & 0 & 0 \\
\hline
0 & \int\wp(\omega)\cos\omega td\omega & -\int\wp(\omega)\sin\omega td\omega & 0 \\
0 & \int\wp(\omega)\sin\omega td\omega & \int\wp(\omega)\cos\omega td\omega & 0 \\
0 & 0 & 0 & 1
\end{array}
\right].\label{sup_eq_qubit_deph_he_adj_repr}
\end{eqnarray}

Note that the two matrices (\ref{sup_eq_qubit_deph_dyn_map}) and (\ref{sup_eq_qubit_deph_he_adj_repr}) can be simultaneously diagonalized by multiplying
\begin{equation}
X=\left[
\begin{array}{c|ccc}
1 & 0 & 0 & 0 \\
\hline
0 & 1 & -i & 0 \\
0 & 1 & i & 0 \\
0 & 0 & 0 & 1 \\
\end{array}
\right],
X^{-1}=\left[
\begin{array}{c|ccc}
1 & 0 & 0 & 0 \\
\hline
0 & \frac{1}{2} & \frac{1}{2} & 0 \\
0 & \frac{i}{2} & \frac{-i}{2} & 0 \\
0 & 0 & 0 & 1 \\
\end{array}
\right]
\end{equation}
from the left and the right, respectively. Namely,
\begin{equation}
X\cdot\mathcal{E}^{(\widetilde{\sigma})}_t\cdot X^{-1}=\left[
\begin{array}{c|ccc}
1 & 0 & 0 & 0 \\
\hline
0 & \phi(t) & 0 & 0 \\
0 & 0 & \phi^\ast(t) & 0 \\
0 & 0 & 0 & 1
\end{array}
\right]
\end{equation}
and
\begin{eqnarray}
&&\int_{\mathbb{R}}\wp(\omega)X\cdot e^{-i(\omega\tilde{\sigma}_z/2)t}\cdot X^{-1}d\omega= \nonumber\\
&&\left[
\begin{array}{c|ccc}
1 & 0 & 0 & 0 \\
\hline
0 & \int_\mathbb{R}\wp(\omega)e^{-i\omega t}d\omega & 0 & 0 \\
0 & 0 & \int_\mathbb{R}\wp(\omega)e^{i\omega t}d\omega & 0 \\
0 & 0 & 0 & 1
\end{array}
\right].
\end{eqnarray}
Therefore, the same conclusion
\begin{equation}
\exp[-i\theta(t)-\Phi(t)]=\int_\mathbb{R}\wp(\omega)e^{-i\omega t}d\omega \label{sup_eq_wp_f-s_transform}
\end{equation}
is immediately manifest and the conventional inverse Fourier transform leads to the final result.

\section{DIGONALIZATION AND ITS IMPLICATION}

In view of Supplementary Equations~(\ref{sup_eq_qubit_deph_dyn_map}) and (\ref{sup_eq_qubit_deph_he_adj_repr}), we can easily obtain the result Supplementary Equation~(\ref{sup_eq_wp_f-s_transform})
without diagonalizing them. Diagonalization seems not necessary. However, the diagonalization provides a deeper insight into the intrinsic algebraic structure.
It is essential for a systematic procedure when tackling higher dimensional problems.

To understand the implications of the diagonalization, we recall that, in the adjoint representation $\mathfrak{sl}(\mathfrak{u}(2))$, the Lie algebra
$\mathfrak{u}(2)$ plays the role of a vector space with the bases $\{\widehat{I},\hat{\sigma}_x,\hat{\sigma}_y,\hat{\sigma}_z\}$. The transformation described by
$X$ and $X^{-1}$ transforms the bases into $\{\widehat{I},\hat{\sigma}_+,\hat{\sigma}_-,\hat{\sigma}_z\}$, where
$\hat{\sigma}_\pm=(\hat{\sigma}_x\pm i\hat{\sigma}_y)/2$, which are the bases of $\mathfrak{gl}(2)=\mathfrak{u}(1)\oplus\mathfrak{sl}(2)$.

On the other hand, as seen in Supplementary Equation~(\ref{sup_eq_single_realization_uni_evo}), $\lambda_0$ is irrelevant in describing the dynamics. We therefore consider only the
traceless member Hamiltonian taken from $\mathfrak{H}$ of $\mathfrak{su}(2)$, namely, $\widehat{H}_\omega=\omega\hat{\sigma}_z/2$. The factor $2$ is included for
later convenience. Its adjoint representation with respect to $\mathfrak{gl}(2)$ basis is obtained by applying $\widetilde{H}_\omega$ on them; namely,
$\widetilde{H}_\omega(\hat{\sigma}_\pm)=[\omega\hat{\sigma}_z/2,\hat{\sigma}_\pm]=\pm1\cdot\omega\hat{\sigma}_\pm$ and
$\widetilde{H}_\omega(\hat{\sigma}_z)=[\omega\hat{\sigma}_z/2,\hat{\sigma}_z]=0$.
Its matrix form is written as
\begin{equation}
\widetilde{H}_\omega=\left[
\begin{array}{c|ccc}
0 & 0 & 0 & 0 \\
\hline
0 & \omega & 0 & 0 \\
0 & 0 &-\omega & 0 \\
0 & 0 & 0 & 0
\end{array}
\right].
\end{equation}
The operators $\{\widehat{I},\hat{\sigma}_+,\hat{\sigma}_-,\hat{\sigma}_z\}$ are the ``eigenvectors'' of $\widetilde{H}_\omega$ associated with the eigenvalues
$\{0,1,-1,0\}$, respectively. The eigenvalues $\pm1$ are therefore referred to as the roots (denoted by $\alpha_{1,2}$) associated to the root spaces
$\mathrm{span}\{\hat{\sigma}_\pm\}$, spanned by the operators $\hat{\sigma}_\pm$. For higher dimensional systems, the CSA $\mathfrak{H}$ possesses more
generators; namely, the member Hamiltonian contains more parameters than a single $\omega$. The roots are no longer real scalars but vectors in an Euclidean space.
This can be seen in the following example.

\section{ROOT SYSTEM}

The root space decomposition is a very important tool in the theory of Lie algebras, especially in describing the structure of an abstract Lie algebra, and has
many prominent applications in elementary particle physics and gauge field theory. However, to thoroughly understand this technique, we would encounter a divergent
bundle of mathematics. This would make it unaccessible to the wide audience in physics. From a practical viewpoint, we instead discuss the following qutrit example,
which demonstrates the core concept of the root space decomposition. This is enough for the scope of this work.

\subsection{Qutrit pure dephasing}

Consider a qutrit pure dephasing described by
\begin{equation}
\mathcal{E}_t\{\rho_0\}=\left[
\begin{array}{ccc}
\rho_{11} & \rho_{12}\phi_1(t) & \rho_{13}\phi_4(t) \\
\rho_{21}\phi_2(t) & \rho_{22} & \rho_{23}\phi_6(t) \\
\rho_{31}\phi_5(t) & \rho_{32}\phi_7(t) & \rho_{33}
\end{array}
\right].
\end{equation}
The ordering of the numbering of $\phi_m(t)$ is for the latter convenience. This will become clear in the following discussions. To guarantee the Hermicity of
$\rho(t)$, $\phi_1(t)=\phi_2^\ast(t)$ and so on. Moreover, they satisfy $\phi_m(0)=1$, $\vert\phi_m(t)\vert\leq1$ for all $t\in\mathbb{R}$, and
$\phi_m(-t)=\phi^\ast_m(t)$.

Inheriting from the Gell-Mann matrices, which form the conventional generators for $\mathfrak{su}(3)$, we define the generators for $\mathfrak{sl}(3)$ as follows:
\begin{eqnarray}
&&\widehat{K}_1=\widehat{K}_2^\dagger=\left[
\begin{array}{ccc}
0 & 1 & 0 \\
0 & 0 & 0 \\
0 & 0 & 0 \\
\end{array}
\right],
\widehat{K}_3=\widehat{L}_3=\left[
\begin{array}{ccc}
1 & 0 & 0 \\
0 & -1 & 0 \\
0 & 0 & 0 \\
\end{array}
\right], \nonumber\\
&&\widehat{K}_4=\widehat{K}_5^\dagger=\left[
\begin{array}{ccc}
0 & 0 & 1 \\
0 & 0 & 0 \\
0 & 0 & 0 \\
\end{array}
\right],
\widehat{K}_6=\widehat{K}_7^\dagger=\left[
\begin{array}{ccc}
0 & 0 & 0 \\
0 & 0 & 1 \\
0 & 0 & 0 \\
\end{array}
\right], \nonumber\\
&&\widehat{K}_8=\widehat{L}_8=\frac{1}{\sqrt{3}}\left[
\begin{array}{ccc}
1 & 0 & 0 \\
0 & 1 & 0 \\
0 & 0 & -2 \\
\end{array}
\right].
\end{eqnarray}
Additionally, $\widehat{K}_0=\widehat{I}$ is the generator for $\mathfrak{u}(1)$. Then, the dynamical linear map in this basis is a diagonalized matrix
\begin{equation}
\mathcal{E}^{(\widetilde{L})}_t=\left[\begin{array}{c|cccccccc}
1 & & & & & & & & \\
\hline
 &\phi_1(t)& & & & & & & \\
 & &\phi_2(t)& & & & & & \\
 & & & 1 & & & & & \\
 & & & &\phi_4(t)& & & & \\
 & & & & &\phi_5(t)& & & \\
 & & & & & &\phi_6(t)& & \\
 & & & & & & &\phi_7(t)& \\
 & & & & & & & & 1
\end{array}\right],
\end{equation}
which is obtained by applying the dynamics on each generator: $\mathcal{E}_t\{\widehat{K}_m\}=\phi_m(t)\widehat{K}_m$.

In general, a $3\times3$ Hermitian operator is a linear combination of the above 9 generators. However, as seen in Supplementary Equation~(\ref{sup_eq_single_realization_uni_evo}),
$\lambda_0$ is irrelevant in describing the dynamics. We therefore neglect $\lambda_0$ and consider only the traceless member Hamiltonians. Furthermore, since we
only consider the elements in $\mathfrak{H}$, the simulating HE is of the form
$\{(\wp(\lambda_3,\lambda_8),\widehat{H}_{\boldsymbol{\lambda}})\}_{\lambda_3,\lambda_8}$ with
$\widehat{H}_{\boldsymbol{\lambda}}=(\lambda_3\widehat{L}_3+\lambda_8\widehat{L}_8)/2\in\mathfrak{H}$ and $\boldsymbol{\lambda}=(\lambda_3,\lambda_8)\in\mathbb{R}^2$.
By estimating all the commutators $[\widehat{H}_{\boldsymbol{\lambda}},\widehat{K}_m]=(\boldsymbol{\alpha}_m\cdot\boldsymbol{\lambda})\widehat{K}_m$, we obtain its adjoint
representation in the $\mathfrak{gl}(3)$ basis
$\widetilde{H}_{\boldsymbol{\lambda}}=(\lambda_3\widetilde{L}_3+\lambda_8\widetilde{L}_8)/2=\mathrm{diag}\left[\begin{array}{c|ccccc}
0 & \lambda_3 & -\lambda_3 & 0 & (\lambda_3+\sqrt{3}\lambda_8)/2 & -(\lambda_3+\sqrt{3}\lambda_8)/2
\end{array}\right.$
$\left.\begin{array}{ccc}
(-\lambda_3+\sqrt{3}\lambda_8)/2 & -(-\lambda_3+\sqrt{3}\lambda_8)/2 & 0
\end{array}\right]$, being a diagonal matrix as well.


Finally, from Eq.~(4) in the main text, $\mathcal{E}^{(\widetilde{L})}_t=\int_\mathcal{G} p_\lambda e^{-i\widetilde{H}_\lambda t} d\lambda$, we conclude that the
(quasi-)distribution $\wp(\lambda_3,\lambda_8)$ is governed by the following simultaneous Fourier transforms:
\begin{equation}
\left\{
\begin{array}{l}
\phi_1(t)=\int_{\mathbb{R}^2}\wp(\lambda_3,\lambda_8)e^{-i\lambda_3t}d\lambda_3d\lambda_8 \\
\phi_4(t)=\int_{\mathbb{R}^2}\wp(\lambda_3,\lambda_8)e^{-i(\lambda_3+\sqrt{3}\lambda_8)t/2}d\lambda_3d\lambda_8 \\
\phi_6(t)=\int_{\mathbb{R}^2}\wp(\lambda_3,\lambda_8)e^{-i(-\lambda_3+\sqrt{3}\lambda_8)t/2}d\lambda_3d\lambda_8
\end{array}
\right..
\label{sup_eq_simu_qutrit_pure_deph}
\end{equation}

\subsection{Root system of $\mathfrak{su}(3)$}

Instead of being engaged in solving the Supplementary Equations~(\ref{sup_eq_simu_qutrit_pure_deph}), we look further insight into its structure in terms of the root system.
According to $\widetilde{H}(\lambda_3,\lambda_8)$ above, we can list all the roots of $\mathfrak{su}(3)$:
\begin{eqnarray}
\boldsymbol{\alpha}_1=-\boldsymbol{\alpha}_2=(1,0), \nonumber\\
\boldsymbol{\alpha}_4=-\boldsymbol{\alpha}_5=\left(\frac{1}{2},\frac{\sqrt{3}}{2}\right), \nonumber\\
\boldsymbol{\alpha}_6=-\boldsymbol{\alpha}_7=\left(-\frac{1}{2},\frac{\sqrt{3}}{2}\right).
\end{eqnarray}
They are two dimensional vectors of equal length on the $\lambda_3$-$\lambda_8$ plane. We plot them in Fig.~2 in the main text.

We can observe that the roots satisfy the following properties:
\begin{enumerate}[R1]
\item The roots come in pair, e.g., $\boldsymbol{\alpha}_1$ and $\boldsymbol{\alpha}_2$ are two roots pointing in opposite direction. The three roots $\boldsymbol{\alpha}_1$,
$\boldsymbol{\alpha}_4$, and $\boldsymbol{\alpha}_6$ are referred to be positive.
\item Among the three positive roots, $\boldsymbol{\alpha}_1$ and $\boldsymbol{\alpha}_6$ are simple and $\boldsymbol{\alpha}_4$ is not, since
    $\boldsymbol{\alpha}_4=\boldsymbol{\alpha}_1+\boldsymbol{\alpha}_6$. \label{sup_item_root_prop}
\item All the roots are of equal length and the angle between any two non-pairing roots is either $\pi/3$, $\pi/2$, or $2\pi/3$.
\end{enumerate}

Based on the observations, we can consider $\wp(\lambda_3,\lambda_8)$ as a distribution over the $\lambda_3$-$\lambda_8$ plane. Now we rewrite
\begin{equation}
\wp(\lambda_3,\lambda_8)d\lambda_3d\lambda_8=\wp^\prime(x_1,x_6)dx_1dx_6
\end{equation}
via the change of variables $x_m=\boldsymbol{\alpha}_m\cdot\boldsymbol{\lambda}$, $m=1,6$. Note that the Jacobian $\mathrm{Det}[\boldsymbol{\alpha}_1~\boldsymbol{\alpha}_6]^{-1}=2/\sqrt{3}$,
due to the change of variables has been absorbed into $\wp^\prime(x_1,x_6)$. The first and third lines in Supplementary Equations~(\ref{sup_eq_simu_qutrit_pure_deph}) lead to
\begin{equation}
\left\{
\begin{array}{l}
\phi_1(t)=\int_\mathbb{R}\wp_1(x_1)e^{-ix_1t}dx_1 \\
\phi_6(t)=\int_\mathbb{R}\wp_6(x_6)e^{-ix_6t}dx_6
\end{array}
\right..
\end{equation}
They are the marginals of $\wp$ along the directions $\boldsymbol{\alpha}_1$ and $\boldsymbol{\alpha}_6$, respectively. $\wp_1(x_1)$ and $\wp_6(x_6)$ can be obtained by
performing the inverse Fourier transform. Moreover, due to the property R\ref{sup_item_root_prop}, the second line in Supplementary Equations~(\ref{sup_eq_simu_qutrit_pure_deph})
describes the correlation between the new random variables $x_1$ and $x_6$. If we consider a special case, e.g., $\phi_4(t)=\phi_1(t)\phi_6(t)$, the second
equation implies that they are independent:
\begin{equation}
\wp^\prime(x_1,x_6)=\wp_1(x_1)\wp_6(x_6).
\end{equation}
This finishes solving $\wp$. For the case of correlated random variables, we consider an example of four-dimensions in the following section.

\section{QUBIT PAIR PURE DEPHASING}

We proceed with a non-trivial example in the presence of correlations between random variables. With this example, we can illustrate the intrinsic complexity of
the retrieval of (quasi-)distributions.

We consider the extended spin-boson model consisting of a non-interacting qubit pair coupled to a common boson bath. The total Hamiltonian reads
\begin{eqnarray}
\widehat{H}_\mathrm{T}&=&\sum_{j=1,2}\frac{\omega_j}{2}\hat{\sigma}_{z,j}+\sum_{\mathbf{k}}\omega_{\mathbf{k}}\hat{b}_{\mathbf{k}}^\dagger\hat{b}_{\mathbf{k}} \nonumber\\
&&+\sum_{j,\mathbf{k}}\hat{\sigma}_{z,j}\otimes(g_{j,\mathbf{k}}\hat{b}_{\mathbf{k}}^\dagger+g_{j,\mathbf{k}}^\ast\hat{b}_{\mathbf{k}}).
\end{eqnarray}
The whole system evolves unitarily according to the unitary operator (in the interaction picture):
\begin{eqnarray}
\widehat{U}^\mathrm{I}(t)&=&\exp\left[i\sum_{\mathbf{k}}\widehat{Z}_{\mathbf{k}}\widehat{Z}_{\mathbf{k}}^\dagger\left(\frac{\omega_{\mathbf{k}}t-\sin\omega_{\mathbf{k}}t}{\omega_{\mathbf{k}}^2}\right)\right] \nonumber\\
&&\times\exp\left[\sum_{\mathbf{k}}\widehat{Z}_{\mathbf{k}}\alpha_{\mathbf{k}}(t)\hat{b}_{\mathbf{k}}^\dagger-\widehat{Z}_{\mathbf{k}}^\dagger\alpha_{\mathbf{k}}^*(t)\hat{b}_{\mathbf{k}}\right],
\end{eqnarray}
where $\widehat{Z}_{\mathbf{k}}=\sum_{j=1,2}g_{j,\mathbf{k}}\hat{\sigma}_{z,j}$ and
$\alpha_{\mathbf{k}}(t)=-i\int_0^t e^{i\omega_{\mathbf{k}}\tau}d\tau=\left(1-e^{i\omega_{\mathbf{k}}t}\right)/\omega_{\mathbf{k}}$.

For simplicity, we assume that $g_{1,\mathbf{k}}=g_{2,\mathbf{k}}$. Tracing out the boson bath, the qubit pair pure dephasing is described by
\begin{equation}
\mathcal{E}_t\{\rho_0\}=\left[
\begin{array}{cccc}
\rho_{11} & \rho_{12}\phi_1(t) & \rho_{13}\phi_4(t) & \rho_{14}\phi_9(t) \\
\rho_{21}\phi_2(t) & \rho_{22} & \rho_{23}\phi_6(t) & \rho_{24}\phi_{11}(t) \\
\rho_{31}\phi_5(t) & \rho_{32}\phi_7(t) & \rho_{33} & \rho_{34}\phi_{13}(t) \\
\rho_{41}\phi_{10}(t) & \rho_{42}\phi_{12}(t) & \rho_{43}\phi_{14}(t) & \rho_{44}
\end{array}
\right],
\end{equation}
with dephasing factors
\begin{eqnarray}
\phi_1(t)&=&\phi_4(t)=\exp[i\theta(t)-\Phi(t)], \nonumber\\
\phi_6(t)&=&1, \nonumber\\
\phi_9(t)&=&\exp[-4\Phi(t)], \nonumber\\
\phi_{11}(t)&=&\phi_{13}(t)=\exp[-i\theta(t)-\Phi(t)],
\end{eqnarray}
where
\begin{eqnarray}
\theta(t)&=&4\int_0^\infty\frac{\mathcal{J}(\omega)}{\omega^2}(\omega t-\sin\omega t)d\omega, \nonumber\\
\Phi(t)&=&4\int_0^\infty\frac{\mathcal{J}(\omega)}{\omega^2}\coth\left(\frac{\hbar\omega}{2k_\mathrm{B}T}\right)(1-\cos\omega t)d\omega. \nonumber\\
\end{eqnarray}
And $\mathcal{J}(\omega)$ is the environmental spectral density function.

Following our procedure, to simulate the qubit pair pure dephasing, we consider the diagonalized member Hamiltonian taken from the CSA $\mathfrak{H}$ of
$\mathfrak{su}(4)$:
\begin{equation}
\widehat{H}_{\boldsymbol{\lambda}}=(\lambda_3\widehat{L}_3+\lambda_8\widehat{L}_8+\lambda_{15}\widehat{L}_{15})/2.
\end{equation}
By estimating all the commutators $[\widehat{H}_{\boldsymbol{\lambda}},\widehat{K}_m]=(\boldsymbol{\alpha}_m\cdot\boldsymbol{\lambda})\widehat{K}_m$, for $m=1,2,\ldots,14$, with
$\widehat{K}_m$ being the generators of $\mathfrak{gl}(4)$, we can list all the root vectors of $\mathfrak{su}(4)$:
\begin{eqnarray}
\boldsymbol{\alpha}_1&=&-\boldsymbol{\alpha}_2=(1,0,0), \nonumber\\
\boldsymbol{\alpha}_4&=&-\boldsymbol{\alpha}_5=\left(\frac{1}{2},\frac{\sqrt{3}}{2},0\right), \nonumber\\
\boldsymbol{\alpha}_6&=&-\boldsymbol{\alpha}_7=\left(-\frac{1}{2},\frac{\sqrt{3}}{2},0\right), \nonumber\\
\boldsymbol{\alpha}_9&=&-\boldsymbol{\alpha}_{10}=\left(\frac{1}{2},\frac{1}{2\sqrt{3}},\sqrt{\frac{2}{3}}\right), \nonumber\\
\boldsymbol{\alpha}_{11}&=&-\boldsymbol{\alpha}_{12}=\left(-\frac{1}{2},\frac{1}{2\sqrt{3}},\sqrt{\frac{2}{3}}\right), \nonumber\\
\boldsymbol{\alpha}_{13}&=&-\boldsymbol{\alpha}_{14}=\left(0,-\frac{1}{\sqrt{3}},\sqrt{\frac{2}{3}}\right).
\end{eqnarray}
Note that the roots $\boldsymbol{\alpha}_1,\ldots,\boldsymbol{\alpha}_{7}$, lying on the $\lambda_3$-$\lambda_8$ plane, are the same as those of $\mathfrak{su}(3)$. The root
system of $\mathfrak{su}(4)$ is even more complicated. For visual clarity, we only show six positive roots in Fig.~3\textbf{b} in the main text. Moreover,
among the six positive roots, $\boldsymbol{\alpha}_1$, $\boldsymbol{\alpha}_6$, and $\boldsymbol{\alpha}_{13}$ are simple because other positive roots can be obtained by combining them,
e.g., $\boldsymbol{\alpha}_9=\boldsymbol{\alpha}_1+\boldsymbol{\alpha}_6+\boldsymbol{\alpha}_{13}$ and $\boldsymbol{\alpha}_{11}=\boldsymbol{\alpha}_6+\boldsymbol{\alpha}_{13}$.

From the equation $\mathcal{E}^{(\widetilde{L})}_t=\int_{\mathbb{R}^3} p(\boldsymbol{\lambda})e^{-i\widetilde{H}_{\boldsymbol{\lambda}}t} d^3\boldsymbol{\lambda}$, the
(quasi-)distribution $\wp(\boldsymbol{\lambda})$, over $\mathbb{R}^3$ space, is governed by six simultaneous Fourier transforms:
\begin{equation}
\phi_m(t)=\int_{\mathbb{R}^3}\wp(\boldsymbol{\lambda})e^{-i(\boldsymbol{\alpha}_m\cdot\boldsymbol{\lambda})t}d^3\boldsymbol{\lambda},~m=1,4,6,9,11,13.
\label{sup_eq_simu_qubit_pair_pure_deph}
\end{equation}
Generically, the three random variables are correlated. To solve the correlated $\wp$, we therefore perform the change of variables
$x_m=\boldsymbol{\alpha}_m\cdot\boldsymbol{\lambda}$, $m=1,6,13$, because they are simple and can be used to expand the other roots, and we rewrite
\begin{equation}
\wp(\lambda_3,\lambda_8,\lambda_{15})d\lambda_3d\lambda_8d\lambda_{15}=\wp^\prime(x_1,x_6,x_{13})dx_1dx_6dx_{13}.
\end{equation}
Note that the Jacobian $\mathrm{Det}[\boldsymbol{\alpha}_1~\boldsymbol{\alpha}_6~\boldsymbol{\alpha}_{13}]^{-1}=\sqrt{2}$ due to the change of variables has been absorbed
into $\wp^\prime$. Then, the three axes of $\wp^\prime$ are defined by the three simple roots.

Additionally, since $\phi_6(t)=1$, we can observe the following correspondence between the root vectors and the dephasing factors:
\begin{eqnarray}
\boldsymbol{\alpha}_1+\boldsymbol{\alpha}_6=\boldsymbol{\alpha}_4 &\leftrightarrow& \phi_1(t)\phi_6(t)=\phi_4(t), \nonumber\\
\boldsymbol{\alpha}_6+\boldsymbol{\alpha}_{13}=\boldsymbol{\alpha}_{11} &\leftrightarrow& \phi_6(t)\phi_{13}(t)=\phi_{11}(t).
\end{eqnarray}
This implies that $\wp^\prime=\wp_6(x_6)\wp_{1,13}(x_1,x_{13})$ is separated into two parties and they can be determined according to the set of equations:
\begin{eqnarray}
\phi_1(t)&=&\int_\mathbb{R}\wp_1(x_1)e^{-ix_1 t}dx_1, \nonumber\\
\phi_{13}(t)&=&\int_\mathbb{R}\wp_{13}(x_{13})e^{-ix_{13} t}dx_{13}, \nonumber\\
\phi_9(t)&=&\int_{\mathbb{R}^2}\wp_{1,13}(x_1,x_{13})e^{-ix_1 t}e^{-ix_{13} t}dx_1dx_{13}, \nonumber\\
1&=&\int_\mathbb{R}\wp_6(x_6)e^{-ix_6 t}dx_6.
\label{sup_eq_qubit_pair_equations}
\end{eqnarray}
The first and second line specify the marginals of $\wp_{1,13}(x_1,x_{13})$ along the directions $\boldsymbol{\alpha}_1$ and $\boldsymbol{\alpha}_{13}$, respectively;
meanwhile, the third line describes the correlation between them. The last line immediately leads to the result $\wp_6(x_6)=\delta(x_6)$.

Consider the Ohmic spectral density $\mathcal{J}(\omega)=\omega\exp(-\omega/\omega_\mathrm{c})$ in the zero-temperature limit, the dephasing factors can be
calculated explicitly:
\begin{eqnarray}
\phi_1(t)&=&e^{i\theta(t)-\Phi(t)}=\frac{\exp[i(4\omega_\mathrm{c}t-4\arctan(\omega_\mathrm{c}t))]}{(1+\omega^2_\mathrm{c} t^2)^2},  \nonumber\\
\phi_{13}(t)&=&e^{-i\theta(t)-\Phi(t)}=\frac{\exp[-i(4\omega_\mathrm{c}t-4\arctan(\omega_\mathrm{c}t))]}{(1+\omega^2_\mathrm{c} t^2)^2},  \nonumber\\
\phi_9(t)&=&e^{-4\Phi(t)}=\frac{1}{(1+\omega^2_\mathrm{c} t^2)^8}.
\end{eqnarray}
The two marginals are easily obtained by the conventional Fourier transform
\begin{eqnarray}
\wp_1(x_1)&=&\frac{1}{2\pi}\int_{-\infty}^\infty\phi_1(t)e^{ix_1t}dt \nonumber\\
&=&\left\{
\begin{array}{cl}
\frac{1}{6\omega^4_\mathrm{c}}(x_1+4\omega_\mathrm{c})^3e^{-\frac{x_1+4\omega_\mathrm{c}}{\omega_\mathrm{c}}} &, x_1\geq-4\omega_\mathrm{c} \\
0 &, x_1<-4\omega_\mathrm{c}
\end{array}
\right. \nonumber\\
\end{eqnarray}
and
\begin{eqnarray}
\wp_{13}(x_{13})&=&\frac{1}{2\pi}\int_{-\infty}^\infty\phi_{13}(t)e^{ix_{13}t}dt \nonumber\\
&=&\left\{
\begin{array}{cl}
0 &, x_{13}>4\omega_\mathrm{c} \\
\frac{-1}{6\omega^4_\mathrm{c}}(x_{13}-4\omega_\mathrm{c})^3e^{\frac{x_1-4\omega_\mathrm{c}}{\omega_\mathrm{c}}} &, x_{13}\leq 4\omega_\mathrm{c}
\end{array}
\right.. \nonumber\\
\end{eqnarray}

However, since $\phi_1(t)\phi_{13}(t)\neq\phi_9(t)$, this implies that the two random variables $x_1$ and $x_{13}$ are correlated and
$\wp_1(x_1)\wp_{13}(x_{13})\neq\wp_{1,13}(x_1,x_{13})$. A difficulty in solving $\wp_{1,13}$ lies in the fact that, in the third line of
Supplementary Equations~(\ref{sup_eq_qubit_pair_equations}), there are two random variables, but they are accompanied with the same time variable $t$.

Interestingly, this can easily be solved by a simple ansatz. Let
\begin{eqnarray}
\theta(t_1-t_{13})&=&4\omega_\mathrm{c}(t_1-t_{13})-4\arctan[\omega_\mathrm{c}(t_1-t_{13})], \nonumber\\
\tau(t_1,t_{13})&=&2^{\frac{8t_1t_{13}}{(t_1+t_{13})^2}}, \nonumber\\
\Psi(t_1,t_{13})&=&2\tau(t_1,t_{13})\mathrm{ln}\left[1+\omega^2_\mathrm{c}\frac{(t_1+t_{13})^2}{\tau(t_1,t_{13})}\right].
\end{eqnarray}
One can observe that $\tau(t,0)=\tau(0,t)=1$ and $\tau(t,t)=4$, then
\begin{eqnarray}
&&\exp[i\theta(t_1-t_{13})-\Psi(t_1,t_{13})]= \nonumber\\
&&\int\int_{-\infty}^\infty \wp_{1,13}(x_1,x_{13})e^{-ix_1 t_1}e^{-ix_{13} t_{13}}dx_1dx_{13}
\end{eqnarray}
simultaneously recovers the first three lines in Supplementary Equations~(\ref{sup_eq_qubit_pair_equations}); namely, $\{t_1=t,t_{13}=0\}$ recovers the first line, $\{t_1=0,t_{13}=t\}$
recovers the second line, and $\{t_1=t,t_{13}=t\}$ recovers the third line. Meanwhile, it is a conventional two-dimensional Fourier transform with distinct time
variables $t_1$ and $t_{13}$. Therefore, $\wp_{1,13}$ can be solved by
\begin{eqnarray}
&&\wp_{1,13}(x_1,x_{13})= \frac{1}{4\pi^2}\times\nonumber\\
&&\int\int_{-\infty}^\infty e^{i\theta(t_1-t_{13})-\Psi(t_1,t_{13})} e^{ix_1 t_1}e^{ix_{13} t_{13}}dt_1dt_{13}.
\end{eqnarray}
This concludes the solution of Supplementary Equations~(\ref{sup_eq_qubit_pair_equations}). The numerical result is shown in Fig.~3\textbf{c} in the main text.

\section{PROOF OF EXISTENCE AND UNIQUENESS}

After introducing our procedure, we now show the proof of the existence and uniqueness of the CHER for pure dephasing.
Since we deal with (quasi-)distribution functions $\wp(\lambda)$, which are real [$\wp(\lambda)\in\mathbb{R}$], normalized [$\int\wp(\lambda)d\lambda=1$], but not
necessarily positive, we start with the $L^1(\mathcal{G})$ space consisting of real functions defined on a locally compact and ablian group $\mathcal{G}$ (generated by CSA $\mathfrak{H}$)
such that their absolute values are Lebesgue integrable. Note that the $L^1(\mathcal{G})$ forms a vector space and is a super set of all (quasi-)distributions. Conversely, an arbitrary
element $f\in L^1(\mathcal{G})$ may not necessarily be normalized.

Besides the addition in $L^1(\mathcal{G})$, we further define a binary operation, the ``multiplication"
$\ast:L^1(\mathcal{G})\times L^1(\mathcal{G})\rightarrow L^1(\mathcal{G})$, in terms of the convolution:
\begin{eqnarray}
h(\lambda)&=&(f\ast g)(\lambda) \nonumber\\
&=&\int_\mathcal{G}f(\lambda-\xi)g(\xi)d\xi\in L^1(\mathcal{G}),~\forall~f,g\in L^1(\mathcal{G}). \nonumber\\
\end{eqnarray}
Equipped with this multiplication, $L^1(\mathcal{G})$ forms a Banach algebra. We assign the delta function $\delta(\lambda)$ the role of multiplicative
identity in the sense that
\begin{equation}
(f\ast\delta)(\lambda)=(\delta\ast f)(\lambda)=f(\lambda),~\forall~f\in L^1(\mathcal{G}).
\end{equation}

On the other hand, consider the adjoint representation $\mathfrak{gl}(\mathfrak{u}(n))$ in the basis of $\mathfrak{gl}(n)$, we define a subset
$\mathcal{D}\subset\mathfrak{gl}(\mathfrak{u}(n))$ consisting of diagonalized maps such that their entries satisfy the conditions:
\begin{enumerate}
\item Every $\mathcal{E}_t^{(\widetilde{L})}\in\mathcal{D}$ is diagonalized.
\item The entry of $\mathcal{E}_t^{(\widetilde{L})}$ corresponding to $\widehat{I}$ is a real constant $A\in\mathbb{R}$.
\item The entries of $\mathcal{E}_t^{(\widetilde{L})}$ corresponding to opposite root spaces are complex conjugate to each other, $\phi_{-\vec{\alpha}}(t)=\phi_{\vec{\alpha}}^\ast(t)$,
and satisfy $\phi_{\vec{\alpha}}(0)=1$ and $\phi_{\vec{\alpha}}(-t)=\phi_{\vec{\alpha}}^\ast(t)$.
\item The entries of $\mathcal{E}_t^{(\widetilde{L})}$ corresponding to the $\mathfrak{H}$ of $\mathfrak{sl}(n)$ are $1$.
\end{enumerate}
Note that $\mathcal{D}$ forms an abelian group and the set of all CPTP pure dephasing dynamical maps is its subset.

After identifying the algebraic structures, the Fourier transform $\mathcal{E}^{(\widetilde{L})}_t=\int_\mathcal{G}\wp(\lambda)e^{-i\lambda\widetilde{L}t} d\lambda$ can
be conceived as a map: $\wp(\lambda)\mapsto\mathcal{E}_t^{(\widetilde{L})}$. Then, given $\widetilde{L}_m\in\mathfrak{H}$, the Fourier transform is an isomorphism from
$L^1(\mathcal{G})$ to $\mathcal{D}$. This is stated in the following critical Lemma:
\begin{lem}
The Fourier transform with generators $\widetilde{L}_m$ taken from the CSA $\mathfrak{H}$ of $\mathfrak{gl}(\mathfrak{u}(n))$ is an isomorphism from $L^1(\mathcal{G})$
to $\mathcal{D}$.
\end{lem}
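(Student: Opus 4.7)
The plan is to verify the four defining features of an isomorphism of Banach algebras in turn: well-definedness into $\mathcal{D}$, linearity, multiplicativity (intertwining convolution with the natural product in $\mathcal{D}$), and bijectivity. The whole argument rests on the crucial observation that every $\widetilde{L}\in\mathfrak{H}$ is simultaneously diagonalizable in the $\mathfrak{gl}(n)$ basis, so the exponential $e^{-i\lambda\widetilde{L}t}$ decomposes into a direct sum of one-dimensional phase blocks labeled by roots, plus the identity block on $\mathfrak{u}(1)\oplus\mathfrak{H}$.

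First I would check that the transform lands in $\mathcal{D}$. Integrating $e^{-i\lambda\widetilde{L}t}$ against $\wp\in L^1(\mathcal{G})$ preserves the block structure and produces, on each root space, a scalar $\phi_{\boldsymbol{\alpha}}(t)=\int_{\mathcal{G}}\wp(\lambda)e^{-i(\boldsymbol{\alpha}\cdot\boldsymbol{\lambda})t}d\lambda$, and on the CSA/identity sector the constant $A=\int\wp\,d\lambda$. Reality of $\wp$ immediately forces $\phi_{-\boldsymbol{\alpha}}(t)=\phi_{\boldsymbol{\alpha}}^{\ast}(t)$ and $\phi_{\boldsymbol{\alpha}}(-t)=\phi_{\boldsymbol{\alpha}}^{\ast}(t)$; normalization of $\wp$ yields $\phi_{\boldsymbol{\alpha}}(0)=1$ and $A=1$. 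These are exactly the four conditions defining $\mathcal{D}$. Linearity in $\wp$ is immediate, and I would promote it to a Banach-algebra homomorphism by the standard abelian-group manipulation
\begin{equation}
\int_\mathcal{G}(f\ast g)(\lambda)e^{-i\lambda\widetilde{L}t}d\lambda=\left(\int_\mathcal{G}f(\lambda)e^{-i\lambda\widetilde{L}t}d\lambda\right)\left(\int_\mathcal{G}g(\xi)e^{-i\xi\widetilde{L}t}d\xi\right),
\end{equation}
where the change of variable $\lambda\mapsto\lambda+\xi$ separates the integrals; in the diagonal basis the right-hand side is just ordinary matrix multiplication, and the multiplicative identity $\delta(\lambda)$ maps to the identity element of $\mathcal{D}$.

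Injectivity then follows from the classical uniqueness theorem for the Fourier transform on $L^1(\mathbb{R}^{n-1})$: by root-system property R2 recalled in Supplementary Note 6, the $(n-1)$ simple roots form a basis of $\mathcal{G}\cong\mathbb{R}^{n-1}$, and the dephasing factors along those directions provide the full multidimensional characteristic function of $\wp$, which determines $\wp$ uniquely a.e.

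The main obstacle I anticipate is surjectivity: a generic $\mathcal{E}_t^{(\widetilde{L})}\in\mathcal{D}$ prescribes $(n^2-n)/2$ dephasing factors, whereas $\wp$ has only $(n-1)$ independent arguments, so the target data are strongly over-determined. My plan is to use only the $(n-1)$ simple-root equations to \emph{define} $\wp$ via the change of variables $x_m=\boldsymbol{\alpha}_m\cdot\boldsymbol{\lambda}$ and ordinary inverse Fourier transforms along each $x_m$-direction, exactly as in the qutrit and qubit-pair procedures above. The non-simple positive-root factors $\phi_{\boldsymbol{\alpha}}(t)$, with $\boldsymbol{\alpha}=\sum_k n_k\boldsymbol{\alpha}_{m_k}$ a combination of simple roots, must then be realized automatically as the corresponding joint characteristic functions of the constructed $\wp$; here the content of the lemma is that $\mathcal{D}$ is to be understood as the \emph{image} of the Fourier transform, so the consistency conditions among its dephasing factors are built in by definition (equivalently, a multidimensional Bochner-type criterion controls when a legitimate $\wp\in L^1(\mathcal{G})$ exists, with the understanding that positivity may fail and $\wp$ is then a quasi-distribution). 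Standard boundedness of the Fourier transform on $L^1$ closes the loop and identifies the map as an isomorphism of Banach algebras.
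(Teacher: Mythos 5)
Your proof follows essentially the same route as the paper's: the convolution-to-product identity obtained from the abelian change of variables establishes the multiplicative homomorphism, and injectivity rests on the uniqueness of the inverse Fourier transform along the root directions, which is precisely the paper's observation that $\delta(\lambda)$ is the only preimage of the identity map $\mathrm{id}^{(\widetilde{L})}$. Your explicit handling of surjectivity---reading $\mathcal{D}$ as the image of the transform so that the consistency conditions among the non-simple-root dephasing factors are built in---is more candid than the paper's proof, which leaves that direction implicit, but it does not alter the substance of the argument.
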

\begin{proof}
Suppose that $f$ and $g$ are two elements of $L^1(\mathcal{G})$, and $\mathcal{E}^{(\widetilde{L})}_f$ and $\mathcal{E}^{(\widetilde{L})}_g$ are their Fourier transform,
with generators $\widetilde{L}_m\in\mathfrak{H}$, respectively. Let $h=f\ast g$, then
\begin{eqnarray}
\mathcal{E}^{(\widetilde{L})}_h&=&\int_\mathcal{G}h(\lambda)e^{-i\lambda\widetilde{L}t}d\lambda \nonumber\\
&=&\int_\mathcal{G}(f\ast g)(\lambda)e^{-i\lambda\widetilde{L}t}d\lambda \nonumber\\
&=&\int_\mathcal{G}\int_\mathcal{G}f(\lambda-\xi)e^{-i\lambda\widetilde{L}t}g(\xi)d\xi d\lambda \nonumber\\
&=&\int_\mathcal{G}f(\lambda-\xi)e^{-i(\lambda-\xi)\widetilde{L}t} d\lambda\int_\mathcal{G}g(\xi)e^{-i\xi\widetilde{L}t}d\xi. \nonumber\\
\end{eqnarray}
The last line is valid with the following two properties. First, for the generators $\widetilde{L}_m\in\mathfrak{H}$, they commute with each other. Otherwise, we
must appeal to the BCH formula. Second, since $\mathcal{G}$ is an abelian group and $\lambda$ runs over all group elements, the rearrangement lemma
guarantees that $\lambda^\prime=\lambda-\xi$ is also the case.
Then, we have
\begin{equation}
\mathcal{E}^{(\widetilde{L})}_h=\mathcal{E}^{(\widetilde{L})}_f\mathcal{E}^{(\widetilde{L})}_g.
\end{equation}
Therefore, the Fourier transform is a multiplicative homomorphism from $L^1(\mathcal{G})$ to $\mathcal{D}$.

In addition, it is obvious that
\begin{equation}
\mathrm{id}^{(\widetilde{L})}=\int_\mathcal{G}\delta(\lambda)e^{-i\lambda\widetilde{L}t}d\lambda,~\forall~t\in\mathbb{R}.
\label{sup_eq_iden_f-s_transform}
\end{equation}
This means that the multiplicative identity $\delta(\lambda)$ in $L^1(\mathcal{G})$ is mapped to the identity map $\mathrm{id}^{(\widetilde{L})}$ in
$\mathcal{D}$, with all diagonal entries being $1$. Additionally, $\delta(\lambda)$ is the only element in the kernel of the Fourier transform
with $\widetilde{L}_m\in\mathfrak{H}$. Namely, $\delta(\lambda)$ is the only solution satisfying Supplementary Equation~(\ref{sup_eq_iden_f-s_transform}). This can easily be seen
from our procedure Eq.~(8) in the main text. Consequently, this proves our results that the Fourier transform
with $\widetilde{L}_m\in\mathfrak{H}$ is an isomorphism.
\end{proof}

This lemma ensures the one-one correspondence between $L^1(\mathcal{G})$ and $\mathcal{D}$. Moreover, a CPTP pure dephasing is an element in $\mathcal{D}$ with
$A=1$; this is equivalent to a normalized $\wp(\lambda)$. This proves the existence and uniqueness of the simulating HE with diagonalized member Hamiltonians for
CPTP pure dephasing.

\section{SIMULATING THE NOISE IN THE S-T$_0$ PURE DEPHASING EXPERIMENT}

Experiments inevitably suffer from the disturbances caused by the fluctuations of the surrounding environment or the imperfection of the measurements. Therefore,
the experimentally measured raw data may potentially deviate from the theoretical prediction of an idealized model.

The prototype of our theoretical model is the S-T$_0$ qubit in a gate-defined double-quantum-dot device, fabricated in a GaAs/AlGaAs heterostructure. The reported spin relaxation time (T$_1$)
in such material can approach several milliseconds, while the time-averaged dephasing time (T$_2^\ast$) is on the time scale of tens of nanoseconds. Therefore the qubit dynamics can be well
approximated as pure dephasing.

In the quantum state tomography experiment, the S-T$_0$ qubit state is constructed by projective measurements onto the three axes of the Bloch sphere defined as
$|\mathrm{X}\rangle=(|\mathrm{S}\rangle+|\mathrm{T}_0\rangle)/\sqrt{2}=|\uparrow\downarrow\rangle$, $|\mathrm{Y}\rangle=(|\mathrm{S}\rangle-i|\mathrm{T}_0\rangle)/\sqrt{2}$, and
$|\mathrm{Z}\rangle=|\mathrm{S}\rangle=(|\uparrow\downarrow\rangle-|\downarrow\uparrow\rangle)/\sqrt{2}$, and measuring the corresponding return probabilities $P_{|j\rangle}(\tau_\mathrm{s})$, $j=\mathrm{X},\mathrm{Y},\mathrm{Z}$, at different free induction decay times $\tau_\mathrm{s}$, as indicated by the blue curves in Supplementary Figure~1\textbf{a}. Once all the
$P_{|j\rangle}(\tau_\mathrm{s})$ are given, we can construct the density matrix
$\rho(\tau_\mathrm{s})=[\widehat{I}+\sum_{j=\mathrm{X},\mathrm{Y},\mathrm{Z}}r_j(\tau_\mathrm{s})\hat{\sigma}_j]/2$ with the trajectories
$\mathbf{r}(\tau_\mathrm{s})=\{r_\mathrm{X}(\tau_\mathrm{s}),r_\mathrm{Y}(\tau_\mathrm{s}),r_\mathrm{Z}(\tau_\mathrm{s})\}$ in the Bloch sphere determined by
\begin{equation}
r_j(\tau_\mathrm{s})=2P_{|j\rangle}(\tau_\mathrm{s})-1,~j=\mathrm{X},\mathrm{Y},\mathrm{Z}.
\label{sup_eq_bloch_sph_traj}
\end{equation}
Then we can apply the analysis explained in the main text and the \textbf{\textsf{Methods}} section.

We complement our theoretical simulation of the S-T$_0$ qubit pure dephasing by including noise effects in terms of statistical fluctuations. The detailed simulation of the noise
effects, as well as our complete analysis, are outlined step by step in the following, and schematically in Supplementary Figure~1.

\begin{figure*}[ht]
\includegraphics[width=\textwidth]{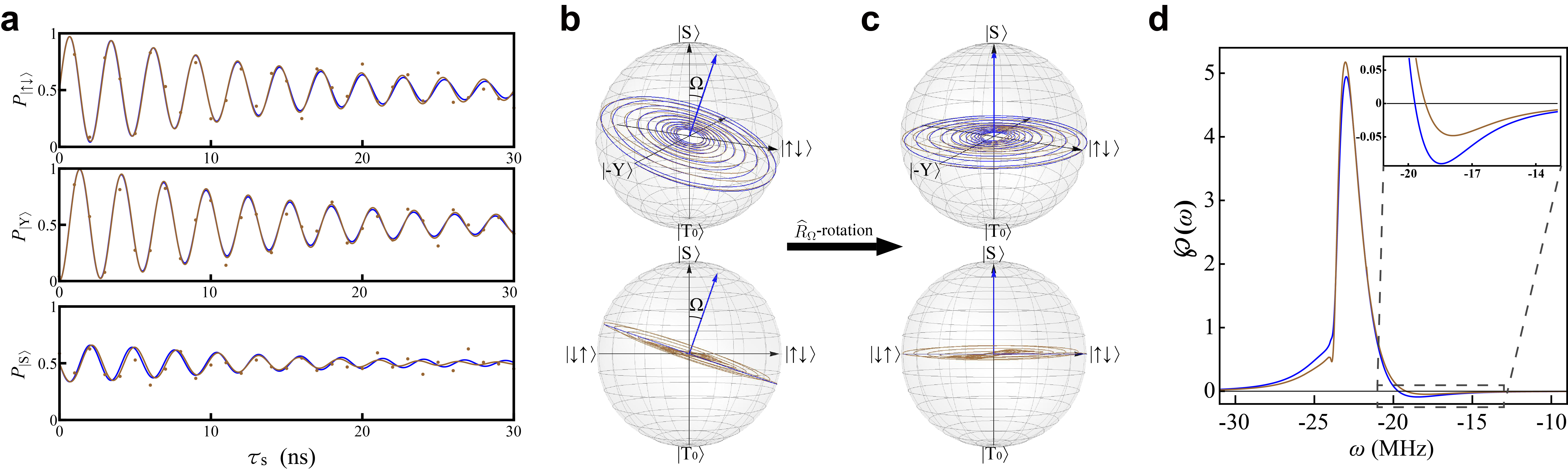}
\caption{\textbf{Schematic illustration of our numerical analysis of the noisy S-T$_0$ qubit pure dephasing.} \textbf{a} The blue curves stem from the theoretical model. The brown points are randomly offset vertically from the blue curves, following a Gaussian noise distribution with standard deviation 0.05. Then the
brown curves fitting the noisy data points simulate the noisy experimental measurement.
\textbf{b} After all the $P_{|j\rangle}(\tau_\mathrm{s})$ being given, we can depict the trajectories in a Bloch sphere and the dynamics are therefore explicitly
visualized. The theoretical (blue) trajectory defines a clear dephasing disk. Its normal vector and the angle $\Omega$ between the $|\mathrm{S}\rangle$-axis
can be identified. However, the noisy (brown) trajectory does not perfectly attach to the dephasing disk. The two panels are shown from different viewing angles. \textbf{c} According to the normal
vector identified in (\textbf{b}), a unitary rotation $\widehat{R}_\Omega$ recovers the standard form in Eq.~(2) in the main text. The two panels are shown from different viewing angles.
\textbf{d} Applying our procedure explained in the main text, we can numerically recover the desired $\wp(\omega)$.
Finally, repeatedly performing the noise simulation leads to a series of fluctuating $\mathcal{N}$ values. Then taking the mean value and the standard deviation, we obtain
the average nonclassicality $\mathcal{N}$ (brown points) and the brown error bars shown in Fig.~4\textbf{d} in the main text.}
\end{figure*}

\begin{enumerate}[Step 1]
\item Based on the theoretically simulation (blue curves), the brown points in Supplementary Figure~1\textbf{a} are randomly offset vertically, following a Gaussian noise
    distribution with standard deviation 0.05. Then the brown curves fitting the noisy data points simulate the noisy experimental measurement.

\item Depict the trajectories in the Bloch sphere according to Supplementary Equation~(\ref{sup_eq_bloch_sph_traj}) and identify the angle $\Omega$ between the axis of rotation (blue vector), i.e.,
    the normal vector of the blue dephasing disk, and $|\mathrm{S}\rangle$-axis, as shown in Supplementary Figure~1\textbf{b}.

\item Perform a unitary rotation $\widehat{R}_\Omega\rho(\tau_\mathrm{s})\widehat{R}_\Omega^\dagger$, with $\widehat{R}_\Omega=\exp[i\Omega\hat{\sigma}_\mathrm{Y}/2]$, as shown in Supplementary
    Figure~1\textbf{c}. This recovers the standard form Eq.~(2) in the main text.

\item For the idealized (blue) trajectory, our procedure is directly applicable and leads to the numerical result $\wp(\omega)$ in Supplementary Figure~1\textbf{d}. For the noisy (brown) trajectory,
    we first project it onto the dephasing disk. Then we can again apply our procedure.

\item Estimate the nonclassicality $\mathcal{N}$ according to Eq.~(4) in the main text, and repeatedly perform the noise simulation. This way, we can obtain a series of
    fluctuating nonclassicality $\mathcal{N}$ values. By taking the mean value and the standard deviation of the $\mathcal{N}$ series, we obtain the average
    nonclassicality $\mathcal{N}$ (brown points) and the brown error bars shown in Fig.~4\textbf{d} in the main text.

\end{enumerate}

\end{document}